\DeclareMathOperator{\Aut}{Aut}
\DeclareMathOperator{\Csp}{CSP}
\DeclareMathOperator{\Qcsp}{QCSP}
\DeclareMathOperator{\mx}{mx}
\DeclareMathOperator{\pp}{pp}
\DeclareMathOperator{\lele}{ll}
\DeclareMathOperator{\mi}{mi}
\DeclareMathOperator{\dpp}{dual-pp}
\DeclareMathOperator{\minop}{min}
\DeclareMathOperator{\mxop}{mx}
\DeclareMathOperator{\qcsp}{QCSP}
\newcommand{\Q}{\mathbb{Q}}
\declaretheorem[numberwithin=section]{theorem}
\declaretheorem[sibling=theorem]{lemma}
\declaretheorem[sibling=theorem]{proposition}
\theoremstyle{definition}
\declaretheorem[sibling=theorem]{question}
\title{Tractability of Quantified Temporal Constraints To The Max}
\begin{document}

\markboth{Bodirsky, Chen, Wrona}
{Quantified Temporal Constraint Satisfaction}

%
%

\author{Manuel Bodirsky}
	\address{Institut f\"{u}r Algebra, TU Dresden,
Germany}
	\email{manuel.bodirsky@tu-dresden.de}
	\urladdr{}
	\thanks{{The research leading to the results presented here has also received funding from the European Research Council under the European Community's Seventh Framework Program (FP7/2007-2013 Grant Agreement no. 257039}}

\author{Hubie Chen}
	\address{Department of Computer Science,
Birkbeck, University of London,
Malet Street,
London WC1E 7HX,
United Kingdom}
	\email{hubie@dcs.bbk.ac.uk}
	\thanks{}

\author{Micha{\l} Wrona}
	\address{
Theoretical Computer Science Department,
 Jagiellonian University,
Poland}
	\email{michal.wrona@uj.edu.pl}
	\urladdr{}
	\thanks{The author is supported by the {\em Swedish Research Council} (VR) under grant 621-2012-3239.}

\maketitle

\renewcommand{\Q}{{\mathbb Q}}
\newcommand{\hcomment}[1]{\marginpar{{\bf Hubie: } #1 }}

\newcommand{\mbcomment}[1]{{\bf Manuel comment: } #1 {\bf End.}}
\newcommand{\mwcomment}[1]{\marginpar{{\bf Micha{\l}: } #1 }}

\newtheorem{observation}[theorem]{Observation}

\begin{abstract}
A \emph{temporal constraint language} is
a set of relations that are first-order definable over $({\mathbb Q};<)$.
We show that several
temporal constraint languages whose constraint satisfaction problem is {maximally tractable} are 
also maximally tractable for the more expressive
\emph{quantified constraint satisfaction problem}. These constraint languages are defined in terms of preservation under certain binary polymorphisms. We also present \emph{syntactic} characterizations of the relations in these languages. 
\end{abstract}

\keywords{Highly Set-Transitive Structures, Constraint Satisfaction Problems, Temporal Constraint Languages, Max-Closed Constraints, Quantified Constraint Satisfaction.}

\section{Introduction}
For a fixed structure $\Gamma$ with finite
relational signature $\tau$, the \emph{constraint satisfaction problem
  of~~$\Gamma$},
denoted by $\Csp(\Gamma)$, 
is the problem of deciding whether a given sentence of the form
$\exists x_1 \dots \exists x_n (\psi_1 \wedge \dots \wedge \psi_m)$ is true in $\Gamma$; here, each $\psi_i$ is an atomic $\tau$-sentence, that is, of the form $R(y_1,\dots,y_k)$
for $R \in \tau$ and $y_1,\dots,y_k \in \{x_1,\dots,x_n\}$. 
The given sentence is called an \emph{instance} 
of $\Csp(\Gamma)$, 
and the formulas $\psi_i$ are called the \emph{constraints} 
of the instance. 
The set of relations of $\Gamma$
is called the \emph{constraint language} of $\Gamma$.


A very active research program attempts to classify the computational complexity of $\Csp(\Gamma)$ for all finite structures $\Gamma$;
see e.g.~the collection of survey articles~\cite{CSPSurveys}. 
Particularly interesting are structures $\Gamma$
that are \emph{maximally tractable};
the idea is that those structures $\Gamma$ are the
structures that sit at the frontier between the easy and the
hard CSPs.
The notion of maximal tractability also applies to structures
$\Gamma$ with an \emph{infinite} relational signature. 
We say that $\Gamma$ is \emph{maximally tractable} for the CSP with respect to a class $\mathcal C$ of relations over the same domain (and in this case we also say that the constraint language of $\Gamma$ is \emph{maximally tractable}) if 
\begin{itemize}
\item for every structure $\Gamma'$ 
whose constraint language is finite and contained in the constraint language of $\Gamma$, the problem $\Csp(\Gamma')$ is in P, and
\item for every set of relations ${\mathcal R}\subseteq {\mathcal C}$ that
properly contains the constraint language of $\Gamma$ 
there exists a structure $\Gamma''$ with a finite constraint
language that is contained in $\mathcal R$ such that
$\Csp(\Gamma'')$ is NP-hard.
\end{itemize}

Surprisingly often, when $\Gamma$ is maximally tractable for the CSP with respect to a class of relations,
the more expressive \emph{quantified constraint satisfaction problem for $\Gamma$}, 
denoted by $\Qcsp(\Gamma)$,
 can be solved in polynomial time as well.
 The problem $\Qcsp(\Gamma)$ is defined analogously to $\Csp(\Gamma)$,
 with the difference that, in addition to existential quantification,
universal quantification is also permitted in the input. 
That is, an instance of $\Qcsp(\Gamma)$ is a sentence of the form
$Q_1 v_1 \dots Q_n v_n (\psi_1 \wedge \dots \wedge \psi_m)$, for $Q_1,\dots,Q_n \in \{\exists,\forall\}$, 
to be evaluated in $\Gamma$.

While one might expect that $\Qcsp(\Gamma)$ is typically harder than
$\Csp(\Gamma)$, for the following structures $\Gamma$ 
both $\Csp(\Gamma)$ and
$\Qcsp(\Gamma)$ are polynomial-time decidable: 
\begin{enumerate}
\item 2-element structures $\Gamma$ preserved by $\min$ or by $\max$~\cite{QHorn};
\item 2-element structures $\Gamma$ preserved by the majority operation~\cite{AspvallPlassTarjan};
\item 2-element structures $\Gamma$ preserved by the minority
  operation~\cite{Creignou};
\item structures $\Gamma$ preserved by
a near-unanimity operation~\cite{ChenThesis};
\item structures $\Gamma$ preserved by a Maltsev operation~\cite{qcsp};
\item structures $\Gamma$ that contain a unary relation $U$ for every subset of its domain, and $\Csp(\Gamma)$ is tractable~\cite{RQCSP}. 
\end{enumerate}
See~\cite{QCSP-Survey} for a survey on the state of the art of complexity classification for $\Qcsp(\Gamma)$ for finite structures $\Gamma$. 

Maximal tractability of a structure $\Gamma$ for the QCSP
is defined analogously to maximal tractability of $\Gamma$ for the CSP. Note that if $\Gamma$ is maximally tractable for the CSP with respect to some class of relations $\mathcal C$,
and if for all structures $\Gamma'$ whose constraint language
is finite and contained in the constraint language of $\Gamma$
the problem $\Qcsp(\Gamma')$ is in P as well,
then $\Gamma'$ is maximally tractable for the QCSP with respect to $\mathcal C$ as well. 
This is for instance the case for all the maximally tractable 
structures $\Gamma$ that appear in (1)-(6) above
(each of those 6 items contains structures $\Gamma$ that
are maximally tractable for the CSP with respect to the class of all relations over the same domain). If structures that are
maximally tractable for the CSP
are also maximally tractable for the QCSP, then this is interesting
because it tells us 
that in this case 
the restriction to only existential quantifiers does not make
a hard problem easy.

In this article we study this
transfer of maximal tractability for the CSP
to maximal tractability for the QCSP
for structures $\Gamma$ with an infinite domain.
One of the best-understood classes of infinite structures is the class of all \emph{highly set-transitive} structures; these are the structures $\Gamma$
with the property that for all finite subsets $A$, $B$
of equal size, there exists an automorphism of $\Gamma$ which sends $A$ to $B$. 
By a theorem due to Cameron~\cite{Cameron},
this is the case if and only if $\Gamma$ is isomorphic to a structure with domain ${\mathbb Q}$ whose relations 
are \emph{first-order definable} over $({\mathbb Q};<)$,
i.e., for each relation $R$ of $\Gamma$ 
there is a first-order formula
that defines $R$ over $({\mathbb Q};<)$ (without parameters). 
In the context of constraint satisfaction, the constraint languages of highly set-transitive structures 
 have also been called \emph{temporal constraint languages}, since many problems in qualitative temporal reasoning can be formulated
 as $\Csp(\Gamma)$ for such structures $\Gamma$ (see~\cite{tcsps-journal,ll} for examples). 
We refer to a relation having a first-order definition in 
$({\mathbb Q};<)$ as a \emph{temporal relation}. From now on, maximal tractability will be with respect to the class of all temporal relations over ${\mathbb Q}$.

 The class of temporal constraint languages is very rich, and often provides examples and counterexamples for CSPs of infinite structures $\Gamma$.
 For example, for finite structures $\Gamma$
 it is known that $\Csp(\Gamma)$ can be solved by Datalog if and only
 if the core of $\Gamma$ expanded by constants does not interpret primitively positively the structure
 $({\mathbb Z}_p;\{(x,y,z)\; | \; x - y + z=1\})$; 
this follows from the main result in~\cite{BoundedWidth}
in combination with~\cite{FederVardi}. 
The same statement is false already for the 
simple structure
$({\mathbb Q};\{(x,y,z) \; | \; x>y \vee x>z\})$, which does not interpret  $({\mathbb Z}_p;\{(x,y,z)\; | \; x - y +z=1\})$ primitively positively with constants\footnote{Since it has the polymorphism $f(x,y) = \min(x,y)$ satisfying $\forall x,y (f(x,y)=f(y,x))$; see~\cite{Bodirsky-HDR} for an introduction to primitive positive interpretations and the related universal algebra.},
but which cannot be solved by Datalog; see~\cite{ll}. 
The list of phenomena that are specific to 
infinite-domain constraint satisfaction and that
are exemplified already for temporal constraint 
languages can be prolonged easily; we refer 
to~\cite{Bodirsky-HDR} for more detail.

 The complexity of $\Csp(\Gamma)$ has been completely classified for structures $\Gamma$ with a temporal constraint language (due to~\cite{tcsps-journal}; also see~\cite{Bodirsky-HDR}). 
 In this paper we visit
 some of the CSP maximally tractable temporal constraint 
 languages and study their $\Qcsp$. 
For the languages that we study, we show that 
even the $\Qcsp$ is polynomial-time tractable. 
 Our results are as follows.
\begin{itemize}
\item When all relations of $\Gamma$ are preserved by the operation $(x,y) \mapsto \min(x,y)$, then
$\Qcsp(\Gamma)$ is in P.
\item When all relations in $\Gamma$ are preserved by the operation $\mx$ (introduced in~\cite{tcsps-journal}),
then $\Qcsp(\Gamma)$ is in P.
As with $\min$, the operation $\mx$
 is binary and commutative; we repeat the
definition in Section~\ref{sect:mx}.
\end{itemize}

We complement those results by giving intuitive \emph{syntactic
  descriptions} of the temporal relations that are preserved by
$\min$, and 
likewise of those that are preserved by $\mx$. That is, we present a class of syntactically restricted first-order formulas with the property that a temporal relation is preserved by min if and only if it has a definition by a formula from the class,
and likewise for $\mx$.
Those characterizations are important 
since they give a more explicit description of
the allowed constraints in the respective QCSPs.
A similar syntactic description of relations over finite domains preserved by $\min$ can be found in~\cite{Ordered}. 

The results that we present for temporal constraint languages
preserved by $\min$ also hold in analogous form for those that are
preserved by $\max$. 
The operations $\min$ and $\max$
are \emph{dual} in the following sense: $\max(x,y) = -\min(-x,-y)$. 
Similarly, the operation $\mx$ has the dual operation
$(x,y) \mapsto -\mx(-x,-y)$, and again the results that we have
obtained for $\mx$ also hold analogously for the dual of $\mx$. 

\section{Temporal Constraint Languages}
\label{sect:tcsps}
Let $f \colon D^k \rightarrow D$ be a function.
When $t^1=(t^1_1,\dots,t^1_m),\dots,t^k=(t^k_1,\dots,t^k_m) \in D^m$, 
then $f(t^1,\dots,t^k)$ denotes the tuple
$(f(t_1^1,\dots,t^k_1),\dots,f(t_m^1,\dots,t^k_m))$ from $D^m$ that we obtain by applying $f$ \emph{componentwise}.
A function $f \colon D^k \rightarrow D$ 
\emph{preserves} a relation $R \subseteq D^m$
if for all $t^1,\dots,t^k \in R$ 
it holds that $f(t^1,\dots,t^k) \in R$.
If $f$ preserves all relations of a structure $\Gamma$, we say that $f$ is a \emph{polymorphism} of $\Gamma$.

\vspace{.4cm}
{\bf Example.} The operation $\min \colon {\mathbb Q}^2 \rightarrow {\mathbb Q}$ that maps two rational numbers to the minimum of the two numbers is a polymorphism of the temporal constraint language
$({\mathbb Q}; \leq, <)$, but not of the temporal constraint language
$({\mathbb Q}; \neq)$, since it maps for instance the tuples
$(1,0) \in \mathop{\neq}$ and $(0,1) \in \mathop{\neq}$ to $(0,0) \notin \mathop{\neq}$.

See Figure~\ref{fig:min} for an illustration
of the operation $\min$.
In diagrams for binary operations $f$ as in Figure~\ref{fig:pp_dualpp},
we draw a directed edge from $(a,b)$ to $(a',b')$ if $f(a,b) < f(a',b')$. 
Unoriented lines in rows and columns of picture for an operation $f$ relate pairs of values that get the same value under $f$. 

\begin{figure}[h]
\begin{center}
\includegraphics{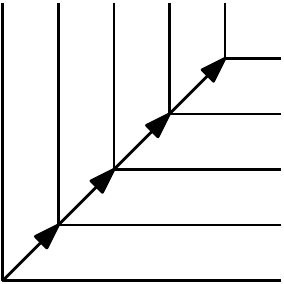}
\caption{Illustration of the operation $\min$.}
\label{fig:min}
\end{center}
\end{figure}

A non-trivial relation that is preserved by $\min$ is the ternary relation defined by the formula $x_1 > x_2 \vee x_1 > x_3$,
and the ternary relation $U$ defined as follows.
\begin{align*}
       U(x,y,z) \; \equiv \; & (x=y \wedge y<z) \\
       \vee \; & (x=z \wedge z<y) \\
       \vee \;  & (x=y \wedge y=z)
\end{align*}

Another important operation for temporal constraint satisfaction
is the binary operation `mx', which is defined as follows~\cite{tcsps-journal}.

$$
\text{mx}(x,y) :=
\left\{\begin{array}{l l l}
\alpha(\min(x,y)) & \text{if } x \neq y \\
\beta(x) & \text{if } x=y
\end{array} \right.
$$
where $\alpha$ and $\beta$ are unary operations that preserve $<$ such that
$\alpha(x) < \beta(x) < \alpha(x+\varepsilon)$ for all $x\in\mathbb{Q}$ and all $0<\varepsilon\in\mathbb{Q}$ (see Figure~\ref{fig:mx}). It is not difficult to show that
such operations $\alpha,\beta$ do exist; see~\cite{tcsps-journal}.

\begin{figure}[h]
\begin{center}
\includegraphics{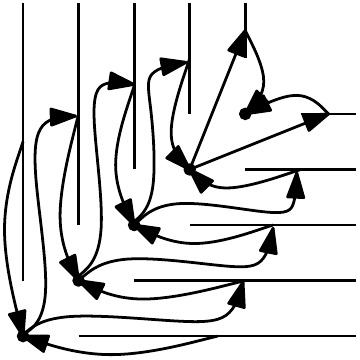}
\caption{Illustration of the operation $\mx$.}
\label{fig:mx}
\end{center}
\end{figure}

Note that mx does not preserve the relation $\leq$.
It also does not preserve the relation $U$ introduced above.
An example of a relation that is preserved by mx is
the ternary relation $X$ defined as follows.

\begin{align*}
       X(x,y,z) \; \equiv \; & (x=y \wedge y<z) \\
       \vee \; & (x=z \wedge z<y) \\
       \vee \; & (y=z \wedge y<x)
\end{align*}

Finally, let $\pp$ be an arbitrary binary operation on $\mathbb{Q}$ such that $\pp(a,b)\leq
\pp(a',b')$ iff one of the following cases applies:
\begin{itemize}
\item $a\leq 0$ and $a \leq a'$
\item $0 < a$, $0 < a'$, and $b\leq b'$.
\end{itemize}
Clearly, such an operation exists. 
For an illustration, see the left diagram in Figure~\ref{fig:pp_dualpp}.
The right diagram of Figure~\ref{fig:pp_dualpp} is  an illustration of the $\dpp$ operation. The name of the operation $\pp$ is derived from the word \emph{`projection-projection'},
since the operation behaves as a projection to the first argument
for negative first argument, and a projection to the second argument for positive first argument.

\begin{figure}[h]
\begin{center}
\includegraphics[scale=.8]{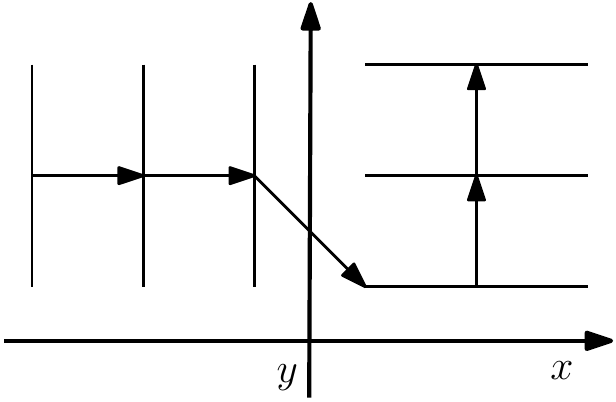}\hskip1cm\includegraphics[scale=.8]{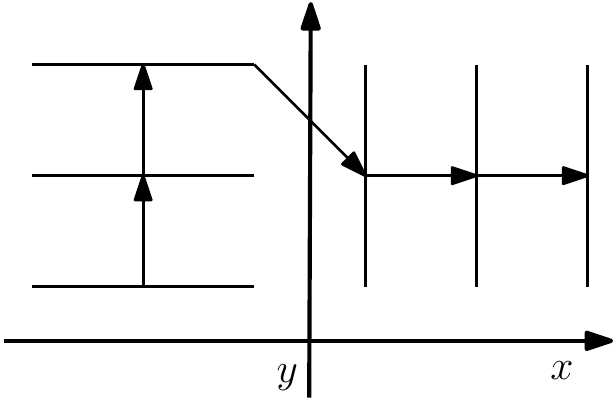}
\caption{A visualization of $\pp$ (left) and $\dpp$ (right).}
\label{fig:pp_dualpp}
\end{center}
\end{figure}

The following two propositions have been shown in~\cite{tcsps-journal}.

\begin{proposition}[\cite{tcsps-journal}]
\label{prop:tractability}
Let $R_1,\dots,R_k$ be temporal relations that are preserved
by min (respectively, mx). Then $\Csp({\mathbb Q}; R_1,\dots,R_k)$ is in P.
\end{proposition}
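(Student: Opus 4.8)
The plan is to solve both CSPs by the same ``peel off the minimal layer'' strategy, exploiting that each polymorphism forces the set of solutions of an instance to be closed under the respective operation. I first treat $\min$. Fix an instance $\Phi$ with variable set $V$ and constraints $C_1,\dots,C_m$, each of the form $R_i(\bar y)$. Call a nonempty set $S\subseteq V$ a \emph{free bottom} if there is a solution $s$ of $\Phi$ that is constant on $S$ and satisfies $s(x)<s(y)$ for all $x\in S$ and $y\in V\setminus S$. The algorithm computes the (unique) maximal free bottom $S^\ast$, assigns every variable of $S^\ast$ a common value strictly below all others, deletes these variables, replaces each constraint by its restriction to the remaining variables, and recurses on $V\setminus S^\ast$; if $V\neq\emptyset$ but no free bottom exists, it rejects.

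The correctness of this scheme rests on one structural lemma: \emph{the union of two free bottoms is again a free bottom}, so that $S^\ast$ is well defined. To prove it, let $s_1,s_2$ witness free bottoms $S_1,S_2$, with bottom values $a_1,a_2$. Since every order-preserving bijection between finite subsets of $\mathbb{Q}$ extends to an automorphism of $(\mathbb{Q};<)$, I may apply such an automorphism to $s_2$ so that its bottom value becomes $a_1$ while every value it takes outside $S_2$ is pushed strictly above every value of $s_1$; as automorphisms of $(\mathbb{Q};<)$ preserve all temporal relations, the result is still a solution. A short case check then shows that the coordinatewise minimum of $s_1$ and this modified $s_2$ equals $a_1$ exactly on $S_1\cup S_2$ and is strictly larger elsewhere. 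Because $\min$ is a polymorphism, this minimum is a solution, witnessing that $S_1\cup S_2$ is a free bottom.

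Whether a candidate set $S$ passes the \emph{local} test at a constraint $R_i(\bar y)$---namely that $R_i$ contains a tuple whose $S$-coordinates are equal and strictly below its remaining coordinates---can be decided in constant time, since the language is finite and fixed. Using $\min$-closure once more, one shows that the maximal set passing all local tests is exactly the maximal free bottom $S^\ast$, and that it can be found in polynomial time by a fixpoint computation in which the combination lemma justifies discarding variables that lie in no free bottom. Termination is immediate, as each successful round removes at least one variable; soundness and completeness of the recursion follow from $\min$-closure, which guarantees that $\Phi$ is satisfiable if and only if it has a solution with $S^\ast$ at a strict minimum, reducing satisfiability of $\Phi$ to that of the strictly smaller instance induced on $V\setminus S^\ast$.

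The $\mx$ case follows the same ``peel the bottom'' template, now using closure of solution sets under $\mx$ in place of $\min$. The main obstacle is that $\mx$ is neither idempotent (as $\mx(x,x)=\beta(x)\neq x$) nor associative, so the clean meet-semilattice picture underlying the $\min$ argument is unavailable and the combination lemma must be re-proved by hand. Here one exploits that off the diagonal $\mx$ behaves as an order-automorphism applied to $\min$, while on the diagonal the auxiliary maps $\alpha,\beta$ separate the ``strictly below'' case from the ``equal'' case; the alignment of the two witnessing solutions by an automorphism must therefore additionally respect this strict/non-strict distinction before $\mx$ is applied. Once the analogue of the combination lemma is established, the local feasibility test and the fixpoint computation of the maximal free bottom carry over, and the same inductive argument places $\Csp(\mathbb{Q};R_1,\dots,R_k)$ in P. I expect verifying this $\mx$ combination lemma, rather than the $\min$ case, to be the genuinely delicate part of the proof.
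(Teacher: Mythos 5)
First, note that the paper does not prove this proposition itself; it imports it from~\cite{tcsps-journal}, so your attempt must be judged against the known algorithms there. Your $\min$ half is essentially the standard approach and is completable: free bottoms (there called free sets / sets of variables that can denote the minimal value) are indeed union-closed for $\min$-closed languages, and the peel-and-recurse scheme is sound. Two caveats: your assertion that ``the maximal set passing all local tests is exactly the maximal free bottom'' is false for unsatisfiable instances and needs the hypothesis that a solution exists; with that hypothesis it can be proved by a hybrid argument using the polymorphism (apply an automorphism to each constraint's local witness so its bottom sits at a common value $b$ below $\min(s)$ and its upper entries sit above $\max(s)$, then take the coordinatewise $\min$ with a global solution $s$), and the algorithm remains correct on unsatisfiable inputs because the back-construction direction of the recursion still holds. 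Also, ``restriction to the remaining variables'' must be read as the peeled relation $\{u : (\text{common bottom},u)\in R\}$, which stays $\min$-closed and yields only finitely many new relations, so polynomial time survives.

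The $\mx$ half, however, rests on a false lemma: for $\mx$-closed languages the union of two free bottoms is \emph{not} in general a free bottom, so there is no maximal free bottom to compute by any fixpoint, and no amount of careful alignment of witnesses before applying $\mx$ will repair this. The paper's own relation $X$, defined by $(x=y\wedge y<z)\vee(x=z\wedge z<y)\vee(y=z\wedge y<x)$, is a direct counterexample: for the single-constraint instance $X(x,y,z)$ the free bottoms are exactly $\{x,y\}$, $\{x,z\}$, $\{y,z\}$, but their union $\{x,y,z\}$ would require the constant tuple to lie in $X$, which it does not. This is no accident: as the paper's Section~3.3 records (via Lemma~28 of~\cite{tcsps-journal}), the min-tuple sets of $\mx$-closed relations are \emph{near-affine} over $\{0,1\}$ --- closed under $(s,s')\mapsto s\oplus s'\oplus 1$ --- not closed under coordinatewise conjunction, which is exactly the closure your greedy/fixpoint computation would need. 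The correct algorithm instead searches for \emph{some} nonempty locally feasible bottom set by solving a system of linear equations over $\mathrm{GF}(2)$ (one Gaussian-elimination run per candidate variable forced into the bottom), and correctness of peeling an arbitrary such set, rather than a maximal one, is then established by the same hybrid argument as for $\min$ (here one uses that $\mx(u,v)=\alpha(v)$ whenever $u>v$, so the per-constraint hybrids agree globally). So your proposal correctly identifies the $\mx$ case as the delicate one, but the template you propose for it --- combination lemma plus maximal-free-bottom fixpoint --- cannot be made to work and must be replaced by the affine linear-algebra search.
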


\begin{proposition}[\cite{tcsps-journal}]
\label{prop:maximality}
Let $R$ be a relation with a first-order definition in $(\mathbb Q;<)$
that is not preserved by min (respectively, mx).
Then there exists a finite set of temporal relations $R_1,\dots,R_k$
preserved by min (respectively, mx)
such that $\Csp({\mathbb Q}; R,R_1,\dots,R_k)$ is NP-hard.
\end{proposition}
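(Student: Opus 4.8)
The plan is to argue algebraically. Since $(\Q;<)$ is $\omega$-categorical, the Galois connection between polymorphism clones and primitive positive definability applies: for a temporal structure, a temporal relation $S$ is primitively positively definable in it if and only if every polymorphism preserves $S$, and primitive positive definitions yield polynomial-time reductions between the corresponding CSPs. Let $\mathcal M$ denote the (infinite) set of all temporal relations preserved by $\min$, and let $\Delta$ be the temporal structure on $\Q$ whose relations are $R$ together with all relations in $\mathcal M$. Because a primitive positive definition involves only finitely many conjuncts, and because NP-hardness of an infinite-signature temporal structure is witnessed on a finite sub-signature, it suffices to prove that $\Csp(\Delta)$ is NP-hard; the relations $R_1,\dots,R_k$ are then the finitely many members of $\mathcal M$ occurring in the witnessing reduction.

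First I would bound $\operatorname{Pol}(\Delta)$ from above. Since $\Delta$ contains all of $\mathcal M$ we have $\operatorname{Pol}(\Delta)\subseteq\operatorname{Pol}(\mathcal M)$, and since $R$ is a relation of $\Delta$ that by hypothesis is not preserved by $\min$, we have $\min\notin\operatorname{Pol}(\Delta)$. The key observation is that $\min$ is essentially the only tractability-inducing operation preserving all of $\mathcal M$: for each other operation on the canonical list driving tractability of temporal CSPs — $\mx$, $\mi$, $\lele$, $\max$, the duals of $\mx$, $\mi$, $\lele$, and the constant operations — there is a single relation in $\mathcal M$ that it fails to preserve. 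For instance $\max$ does not preserve the relation defined by $x_1>x_2\vee x_1>x_3$, the operation $\mx$ does not preserve $\leq$ or $U$ (both of which lie in $\mathcal M$, as noted in the excerpt), and no constant operation preserves $<$, which also lies in $\mathcal M$. Carrying this out for every operation on the list shows that none of them lies in $\operatorname{Pol}(\mathcal M)\supseteq\operatorname{Pol}(\Delta)$.

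Thus $\operatorname{Pol}(\Delta)$ contains no canonical tractable operation at all, and NP-hardness follows from the fact that a temporal structure whose polymorphism clone omits all of these operations has an NP-hard CSP; this proves the statement for $\min$. The argument for $\mx$ is structurally the same, with $\mathcal M$ replaced by the set of temporal relations preserved by $\mx$. The only extra care needed is in choosing the distinguishing relations, since the convenient relations $\leq$ and $U$ are now unavailable (they are not preserved by $\mx$): here the relation $X$ from the excerpt does the work of excluding $\min$, because $\min$ maps $(0,0,1),(1,0,0)\in X$ to $(0,0,0)\notin X$, and analogous relations exclude the remaining operations.

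The main obstacle, and where the genuine content lies, is the input that a temporal polymorphism clone omitting all of $\min$, $\max$, $\mx$, $\mi$, $\lele$, their duals, and the constants must yield an NP-hard CSP — precisely the hardness half of the full complexity classification of temporal CSPs. Establishing it directly requires two things: showing that this finite list of operations is \emph{complete}, i.e.\ that no binary or higher-arity operation outside it can restore tractability once they are gone; and exhibiting an explicit primitive positive interpretation of a concrete NP-hard relation, with the betweenness relation $\{(x,y,z): x<y<z \vee z<y<x\}$ the prototypical target. The delicate step in such a construction is to bootstrap the single witnessing failure $\min(t^1,t^2)\notin R$ (respectively for $\mx$) into a reduction encoding an NP-complete problem such as Betweenness; by comparison, the per-operation checks that each listed operation violates some relation in $\mathcal M$ are routine.
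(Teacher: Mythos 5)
The paper itself does not prove this proposition: it is imported verbatim from \cite{tcsps-journal}, where maximality is obtained from the hardness half of the full complexity classification of temporal CSPs. Your high-level route --- add $R$ to the language $\mathcal M$ of all $\min$-closed (resp.\ $\mx$-closed) relations, exhibit finitely many relations of $\mathcal M$ witnessing that none of the canonical operations ($\min$, $\mi$, $\mx$, $\lele$, the constants, and the four duals) is a polymorphism, and then invoke the classification to get an NP-hard finite sublanguage --- is exactly how the source derives the statement, so the \emph{shape} of your argument is right, and your individual verifications (e.g.\ that $\min$ maps $(0,0,1),(1,0,0)\in X$ to $(0,0,0)\notin X$, that $\max$ violates $x_1>x_2\vee x_1>x_3$, that constants violate $<$) are correct. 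The appeal to the Galois connection is, incidentally, unnecessary overhead: since failure of preservation is witnessed relation-by-relation, you only need the dichotomy for \emph{finite} temporal languages applied to $\{R\}\cup F$ for a concrete finite witness set $F\subseteq\mathcal M$, which also disposes of any worries about infinite signatures.

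The genuine gap is that, as written, your text does not actually establish the one step carrying all the content. You state that NP-hardness ``follows from the fact that a temporal structure whose polymorphism clone omits all of these operations has an NP-hard CSP,'' and then in your final paragraph you concede that this fact is ``precisely the hardness half of the full complexity classification'' and describe what proving it \emph{would} require (completeness of the operation list, a primitive positive interpretation of a hard relation such as Betweenness) without carrying any of it out. So the proposal either silently cites the main theorem of \cite{tcsps-journal} --- in which case nothing beyond the paper's own citation has been proved --- or it is incomplete at its decisive step; there is no self-contained hardness argument here, and constructing one (bootstrapping the single failure $\min(t^1,t^2)\notin R$ into a pp-interpretation of an NP-hard problem, which is what \cite{tcsps-journal} actually does over many pages) is the missing idea. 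A secondary, smaller gap: your per-operation witness list is only partially carried out --- you give nothing in $\mathcal M$ violated by $\mi$, $\lele$, or the duals of $\mi$, $\mx$, $\lele$ (note $\leq$ will not work against $\lele$, since Ord-Horn relations are $\lele$-closed), and the analogous witnesses for the $\mx$ case beyond $X$ are waved at with ``analogous relations exclude the remaining operations.'' These checks are finite and routine, but they must be exhibited for the reduction to the classification to go through.
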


A straightforward corollary of Proposition~\ref{prop:tractability} 
and Proposition~\ref{prop:maximality} is that the sets of temporal relations preserved by $\min$ and by $\mx$, respectively, are maximally tractable constraint languages.

The set of temporal relations that 
is preserved by $\min$ is a proper subset of the set of temporal relations that is preserved by $\pp$ (see~\cite{tcsps-journal}); e.g., the relation $\{(x,y,z) \in {\mathbb Q}^3 \; | \; (x = y \wedge y < z) \vee (x > y \wedge y = z)\}$ is preserved by $\pp$, but not by $\min$ and not by $\mx$. Indeed, the CSP for the temporal constraint language that just contains this ternary relation is NP-hard.

\section{Syntactic Characterizations}
It is known that the structure $({\mathbb Q};<)$ has \emph{quantifier-elimination}~\cite{Hodges}, that is, for every first-order formula $\phi$
there exists a quantifier-free formula that is equivalent to $\phi$
over $({\mathbb Q};<)$ (we always allow equality $=$ in our formulas). 
We will call such a quantifier-free formula over the signature $\{<\}$
a \emph{temporal formula}. For the sake of presentation, we write temporal formulas 
using all symbols in $\{ <, >, \leq, \geq, =, \neq \}$. Let  $\phi_R$ be a temporal formula
that defines a temporal relation $R$. For a tuple $t \in R$ and a variable $v$ occurring in $\phi_R$, 
we write $t(v)$ to indicate the value in $t$ corresponding to the variable $v$.

It is also well-known that $({\mathbb Q};<)$ is a \emph{homogeneous structure}~\cite{MacphersonSurvey}, that is,
every isomorphism between finite substructures extends to an automorphism. The set of automorphisms of $(\Q;<)$
will be denoted by $\Aut(\Q;<)$. A set of the form $\{ \alpha(t_1), \ldots, \alpha(t_k) \mid \alpha \in \Aut(\Q; <) \}$ is
called an orbit of $k$-tuples (the orbit of $(t_1, \ldots, t_k)$). 
It is well known that every $k$-ary temporal relation is a union of orbits of $k$-tuples.

Let $\phi_1$ and $\phi_2$ 
be two temporal formulas. We say that $\phi_1$
\emph{entails} $\phi_2$ if and only if $(\Q; <) \models \forall v_1,\dots,v_n 
(\phi_1 \Rightarrow \phi_2)$. 
We also say that an $n$-ary temporal relation \emph{$R$ entails $\phi$} if and only if $(\Q; <) \models \forall v_1,\dots,v_n (\phi_R(v_1, \ldots v_n) \Rightarrow \phi)$, where $\phi_R$ is a temporal formula defining $R$.
We will use the following fact, which follows directly 
from the definition of entailment.

\begin{observation}
\label{obs:replacement}
Let $\phi_R := \Phi \wedge \psi$ be a temporal formula defining a temporal relation $R$ and $\psi'$ a temporal formula entailed by $R$ which entails $\psi$.
Then $\phi'_R := \Phi \wedge \psi'$ also defines $R$.
\end{observation}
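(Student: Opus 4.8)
The plan is to verify the claim by a routine double-inclusion argument at the level of the relations defined, unwinding the definition of entailment. Let me write $[\chi]$ for the set of tuples $t$ with $(\Q;<) \models \chi(t)$, so that by hypothesis $R = [\phi_R] = [\Phi] \cap [\psi]$, and the goal is to show $[\Phi] \cap [\psi'] = R$.

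First I would translate the two entailment hypotheses into containments of these tuple sets. The hypothesis that $\psi'$ is entailed by $R$ says exactly that $(\Q;<) \models \forall v\,(\phi_R(v) \Rightarrow \psi'(v))$, i.e. $R \subseteq [\psi']$. The hypothesis that $\psi'$ entails $\psi$ says $(\Q;<) \models \forall v\,(\psi'(v) \Rightarrow \psi(v))$, i.e. $[\psi'] \subseteq [\psi]$.

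With these two containments in hand, both inclusions are immediate. For $[\Phi] \cap [\psi'] \subseteq R$: any tuple $t$ satisfying $\Phi \wedge \psi'$ lies in $[\Phi]$ and, since $[\psi'] \subseteq [\psi]$, also in $[\psi]$, so $t \in [\Phi] \cap [\psi] = R$. For the reverse inclusion $R \subseteq [\Phi] \cap [\psi']$: any $t \in R$ lies in $[\Phi]$ (because $R = [\Phi] \cap [\psi] \subseteq [\Phi]$) and lies in $[\psi']$ (because $R \subseteq [\psi']$). Combining the two gives $[\Phi] \cap [\psi'] = R$, which is precisely the statement that $\phi'_R = \Phi \wedge \psi'$ defines $R$.

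There is essentially no obstacle here: the entire content lies in correctly reading off the two set containments from the definition of entailment, after which both inclusions follow from elementary Boolean reasoning. The only point requiring any care is to keep the roles of $\psi$ and $\psi'$ straight, since $\psi'$ is the \emph{stronger} formula (it entails $\psi$) yet is still weak enough to be implied by all of $R$; keeping this direction fixed is what prevents either inclusion from being accidentally reversed.
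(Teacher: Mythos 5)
Your proof is correct and matches the paper's intent exactly: the paper states this observation without proof, remarking only that it ``follows directly from the definition of entailment,'' and your double-inclusion argument is precisely the routine verification being alluded to. Nothing further is needed.
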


\subsection{Characterization of $\pp$-closed Relations}

\begin{theorem}
\label{thm:ppclosed}
A temporal relation $R$ is preserved by $\pp$  if and only if it can be defined as a conjunction of clauses of the form
\begin{equation}
\label{eq:ppclause}
x \neq y_1 \vee \cdots \vee x \neq y_k \vee x \geq z_1 \vee \cdots \vee x \geq z_l,
\end{equation}
where it is permitted that $l = 0$ or $k = 0$, in which cases the clause is a disjunction of disequalities
or contains no disequalities, respectively.
\end{theorem}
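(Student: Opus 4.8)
The plan is to prove both directions of the equivalence, working with the polymorphism $\pp$ on one side and the syntactic clause-form on the other.

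\medskip

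\noindent\textbf{Easy direction (soundness).} First I would show that every relation defined by a conjunction of clauses of the form~\eqref{eq:ppclause} is preserved by $\pp$. Since a conjunction of preserved relations is preserved, it suffices to check a single clause $C$ of the form $x \neq y_1 \vee \cdots \vee x \neq y_k \vee x \geq z_1 \vee \cdots \vee x \geq z_l$. Given two tuples $t^1, t^2 \in C$, I would apply $\pp$ componentwise and argue that $\pp(t^1,t^2)$ again satisfies $C$. The key is to exploit the defining behavior of $\pp$: the output is determined by the sign of the first argument and then acts as a projection. Concretely, I would normalize so that comparisons of $x$ against the $y_i,z_j$ reduce to comparisons of the first argument against $0$ (using that $\Aut(\Q;<)$ acts, and that $\pp$ is defined purely in terms of the qualitative data $\pp(a,b)\le \pp(a',b')$ iff the stated sign/order conditions). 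A short case distinction on whether the $x$-entry of the relevant argument is $\le 0$ or $> 0$ shows the clause type is exactly closed under this projection-like behavior; the asymmetry of the clause (a single distinguished variable $x$ compared via $\neq$ and $\geq$) is precisely matched to the ``projection to first vs.\ second argument'' dichotomy of $\pp$.

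\medskip

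\noindent\textbf{Hard direction (completeness).} The substantive part is the converse: every $\pp$-closed temporal relation $R$ admits such a definition. I would start from an arbitrary temporal (quantifier-free) formula $\phi_R$ defining $R$ and put it in conjunctive normal form, so $\phi_R$ is a conjunction of clauses whose literals are drawn from $\{<,>,\leq,\geq,=,\neq\}$. The goal is to replace each clause by one of the permitted form without changing $R$, and Observation~\ref{obs:replacement} is the tool: if a candidate clause $\psi'$ of the allowed shape is both entailed by $R$ and entails the original clause $\psi$, I may substitute it. So the plan is, for each orbit of tuples \emph{not} in $R$ (equivalently, for each "forbidden" qualitative configuration), to produce a clause of form~\eqref{eq:ppclause} that is violated by that configuration yet entailed by $R$, and then show these clauses together define $R$.

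\medskip

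\noindent The main obstacle, which I expect to occupy the bulk of the argument, is showing that the forbidding clauses can always be taken in the restricted form~\eqref{eq:ppclause}: a single variable $x$ appearing negatively through disequalities and lower-bounded through $\geq$-literals, with all other variables bound to $x$. The essential claim is that $\pp$-closure forces the forbidden configurations of $R$ to have a very rigid shape. I would argue this by contradiction: suppose $R$ entails no clause of form~\eqref{eq:ppclause} that excludes some tuple $t \notin R$. Using homogeneity of $(\Q;<)$ I may reason purely about orbits, and I would try to construct two tuples $t^1, t^2 \in R$ such that $\pp(t^1,t^2)$ lands in the orbit of $t$, contradicting preservation by $\pp$. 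The construction of $t^1,t^2$ is where the projection semantics of $\pp$ must be reverse-engineered: the first-argument tuple $t^1$ should be chosen to ``decide,'' via the signs of its entries relative to some pivot, which coordinates $\pp$ reads off from $t^1$ and which from $t^2$, so that the composite reproduces exactly the equalities and strict inequalities of $t$. Making this construction go through will require a careful analysis of which relations between coordinates ($=$, $<$, $>$) can be realized in the image of $\pp$, and matching the unavoidable asymmetry (one distinguished coordinate $x$) to the unique distinguished role of the first argument of $\pp$. Once the forbidding clauses are obtained in the correct form and shown to be entailed by $R$, it is routine that their conjunction defines exactly $R$, completing the proof.
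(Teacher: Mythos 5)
Your right-to-left direction is essentially the paper's: reduce to a single clause and do a case split on whether the first argument's $x$-entry is $\leq 0$ or $>0$, using that equality of outputs $\pp(a,b)=\pp(a',b')$ forces equality of the first arguments (when nonpositive) or of the second arguments (when both first arguments are positive). One caveat: your suggestion to ``normalize'' the comparisons against $0$ by acting with $\Aut(\Q;<)$ is not sound, because $\pp$ does not commute with automorphisms applied to its arguments (the sign of the first argument is semantically meaningful), so replacing $t^1$ by $\beta t^1$ changes the orbit of $\pp(t^1,t^2)$; the direct case distinction you also mention is what actually carries the argument, so this is repairable.

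The left-to-right direction, however, has a genuine gap: your entire plan culminates in ``construct two tuples $t^1,t^2\in R$ such that $\pp(t^1,t^2)$ lands in the orbit of $t$,'' and you explicitly defer the construction (``will require a careful analysis\dots''). That deferred analysis \emph{is} the content of the paper's proof, and it is organized by machinery your proposal lacks. The paper first replaces an arbitrary CNF by a canonical one: $R$ is defined by the conjunction of all entailed negated order-chains $\neg(x_1 \circ_1 x_2 \wedge \cdots \wedge x_{k-1}\circ_{k-1} x_k)$ with $\circ_i\in\{<,=\}$ (formulas of the form~\eqref{eq:bounds}, whose negations define orbits), then introduces the partial order $\preceq$ (subsequence of variables plus entailment) and keeps only the minimal elements $\Phi_R^m$. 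This yields a clean dichotomy for each minimal chain $\phi = \neg(x=y_1\wedge\cdots\wedge y_k < z_1 \wedge \cdots)$: either no tuple of $R$ realizes $t(x)=t(y_1)=\cdots=t(y_k)<t(z_i)$ for all $i$, in which case the clause~\eqref{eq:ppclause} is entailed by $R$ and entails $\phi$, so Observation~\ref{obs:replacement} licenses the replacement; or such a tuple $t$ exists, and \emph{minimality} of $\phi$ guarantees that the tail chain $\theta=\neg(z_1\circ_1\cdots\circ_{l-1}z_l)$ is not entailed, producing a second witness $t_1\in R$, whereupon a \emph{single} application $\pp(\alpha(t),t_1)$ (with $\alpha(t(x))<0<\alpha(t(z_i))$) satisfies $\neg\phi$ and so lies outside $R$ --- contradiction. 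Note two things your plan misses: (i) the contradiction only needs to produce \emph{some} tuple violating an entailed formula, not a tuple in the orbit of a prespecified $t\notin R$; demanding the latter in one application of $\pp$ is strictly stronger and there is no reason a single $\pp$-step can reach an arbitrary excluded orbit; (ii) without the chain normal form and the minimality order, there is no mechanism for locating the two witnesses or for guaranteeing that your clause-by-clause replacements over literals from all of $\{<,>,\leq,\geq,=,\neq\}$ ever reach the restricted form. As written, the hard direction restates the difficulty rather than resolving it.
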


To provide the proof we fix $n$, and consider a temporal relation of arity $n$.
All the temporal formulas considered in the proof of Theorem~\ref{thm:ppclosed} are meant to have all variables
in $\{v_1, \ldots, v_n\}$.
In particular we take advantage of temporal formulas of the form:
\begin{equation}
\label{eq:bounds}
\neg(x_1 \circ_1 x_2 \wedge \cdots \wedge x_{k-1} \circ_{k-1} x_{k}),
\end{equation} 
where  $x_1, \ldots, x_k \in \{v_1, \ldots, v_n\}$ are pairwise different
and every $\circ_i$ is in $\{ <, = \}$. We focus on 
formulas of the form~(\ref{eq:bounds}) that satisfy an extra condition that we call the
\emph{unambiguity condition} which states that if  $(v_i = v_j)$ occurs in $\phi$, then $i < j$.
Let $\Phi_R$ be the set containing all temporal formulas $\phi$ over variables $\{v_1, \ldots, v_n\}$ 
of the form~(\ref{eq:bounds}) entailed by $R$ that satisfy the unambiguity condition. 
We will now argue that $\bigwedge \Phi_R$ defines $R$.
Observe first that for every orbit of $n$-tuples in $\Aut(\Q;<)$, there is the negation 
$\phi$ of a formula of the form~(\ref{eq:bounds}) that defines this orbit. 
We can certainly assume that $\phi$ is over variables $\{v_1,\ldots, v_n\}$ and satisfies the  unambiguity 
condition. 
On the other hand, to see that every such $\phi$ defines exactly one orbit 
observe that any two tuples 
$t:=(t_1, \ldots, t_n)$ and $t' := (t'_1, \ldots, t'_n)$ that are in the relation defined by $\phi$  
induce isomorphic substructures in $(\Q;<)$.
By the homogeneity of $(\Q; <)$, an isomorphism between such substructures can be extended to an automorphism
of $(\Q;<)$.  It follows that every orbit may be defined by the negation $\phi$ 
of a formula of the form~(\ref{eq:bounds}).
We are now ready to prove that $R'$ defined by $\bigwedge \Phi_R$ is equal to $R$.
By the definition of $R'$ we have that $R \subseteq R'$.
On the other hand if there is an orbit in $\Aut(\Q;<)$ which is not in $R$, then an appropriate formula of the form~(\ref{eq:bounds})
is entailed by $R$. Thus $R'$ does not contain the orbit as well. Hence we obtain that $\bigwedge \Phi_R$ defines $R$.

Finally, we define an order on $\Phi_R$.
For $\phi_1, \phi_2 \in \Phi_R$ of the form
$\neg(x_1 \circ_1 x_2 \wedge \cdots \wedge x_{k-1} \circ_{k-1} x_{k})$
and 
$\neg(y_1 \circ_1 y_2 \wedge \cdots \wedge y_{l-1} \circ_{l-1} y_{l})$,
respectively,
we will say that $\phi_1$ is less than $\phi_2$, in symbols $\phi_1 \preceq \phi_2$, if $x_1, \ldots, x_k$
is a subsequence of $y_1, \ldots, y_l$ and $\phi_1$ entails $\phi_2$.
It is easy to see that this relation is reflexive and transitive. 
By the definition of $\Phi_R$, we have that two formulas $\phi_1, \phi_2 \in \Phi_R$ 
with the same set of variables entail
each other if and only if they are the same formula.
Hence, $(\Phi_R; \preceq)$ is also antisymmetric, and we have
defined a partial order on $\Phi_R$. Let $\Phi_R^m$ be the set of minimal elements of $(\Phi_R; \preceq)$. 
By Observation~\ref{obs:replacement}, it holds that 
$\bigwedge \Phi^m_R$ defines $R$.

\begin{proof}
To prove the right-to-left implication, it is certainly enough to show that every relation $R_{C}$ defined by a single clause $C$ of the 
form~(\ref{eq:ppclause}) is preserved by $\pp$. 
Assume on the contrary that there are $t_1, t_2 \in R_C$ but
$t_3 = \pp(t_1, t_2)$ is not in $R_C$. It implies that $t_3(x) = t_3(y_1) = \cdots = t_3(y_k)$, and hence by the 
 definition of $\pp$ it holds that $t_1(x) = t_1(y_1) = \cdots = t_1(y_k)$ or $t_2(x) = t_2(y_1) = \cdots = t_2(y_k)$.
In the first case, $t_1(x) \leq 0$ and since for at least one $z_j$ we have that $t_1(z_j) \leq t_1(x)$,
it follows that $t_3(z_j) \leq t_3(x)$. Otherwise, we have that $t_1(x) > 0$.
Again, for at least one $z_j$ it holds that $t_2(z_j) \leq t_2(x)$. It follows that $t_3(z_j) \leq t_3(x)$. Thus, every clause of the 
form~(\ref{eq:ppclause}) is preserved by $\pp$. 

We now show that every temporal relation $R$ preserved by $\pp$ may be defined as a conjunction of clauses of the desired form.
Recall that $\bigwedge \Phi_R^m$ defines $R$. We will obtain the desired definition of $R$ by replacing every 
member $\phi$ of $\Phi_R^m$, which is of the form~(\ref{eq:bounds}), 
with a clause of the form (\ref{eq:ppclause}). If $\phi$ is 
$\neg(x = y_1 \wedge \cdots \wedge y_{k-1} = y_k)$, then we replace it by $(x \neq y_1 \vee \cdots \vee x \neq y_k)$.
Since these formulas are equivalent, 
the transformation does not change the defined relation.

Otherwise, $\phi$ contains at least one occurrence of $<$
and hence we can assume that $\phi$ is of the form:
$$\neg(x = y_1 \wedge \cdots \wedge y_{k-1} = y_k \wedge y_k < z_1 \wedge z_1 \circ_1 z_2 \wedge \cdots \wedge z_{l-1} \circ_{l-1} z_l),$$

\noindent
where every $\circ_i$ is in $\{=, <\}$.
Here we will consider two cases. The first is where 
$R$ contains no tuple $t$ satisfying $t(x) = t(y_1) = \cdots = t(y_k) < t(z_i)$ for all $i \in [l]$. Observe that in this case $R$ entails 
$\psi$ equal to $(x \neq y_1 \vee \cdots \vee x \neq y_k \vee x \geq z_1 \vee \cdots \vee x \geq z_l)$. 
It is easy to verify that $\psi$ entails $\phi$. Hence, by Observation~\ref{obs:replacement}, it
follows that we can safely replace $\phi$ by $\psi$ in $\Phi^m_R$.

If $R$ contains a tuple $t$ satisfying $t(x) = t(y_1) = \cdots = t(y_k) < t(z_i)$ for all $i \in [l]$, then, as we show, $R$ is not preserved by $\pp$.
This leads to a contradiction with the assumption.
Consider a formula $\theta$ equal to $\neg(z_1 \circ_1 z_2 \wedge \cdots \wedge z_{l-1} \circ_{l-1} z_l)$.
We have $\theta \preceq \phi$. Since $\phi$ is in $\Phi^m_R$, it follows that $\theta$ is not
in $\Phi_R$, and hence not entailed by $R$.
It implies that $R$ contains a tuple $t_1$ satisfying $(z_1 \circ_1 z_2 \wedge \cdots \wedge z_{l-1} \circ_{l-1} z_l)$.
Let $\alpha$ be automorphisms of $(\Q;<)$ such that $\alpha(t(x)) < 0 <  \alpha(t(z_i))$ for 
all $i \in [l]$. Then 
$t_2 = \pp(\alpha(t), t_1)$ satisfies $\neg \phi$. Since $\phi$ is entailed by $R$,
it follows $t_2 \notin R$, and thus that $R$ is not preserved by $\pp$.
\end{proof}

\subsection{Characterization of $\min$-closed Relations}

\begin{theorem}
\label{thm:minclosed}
A temporal relation $R$ is preserved by $\min$  if and only if it can be defined as a conjunction of clauses of the form
\begin{equation}
\label{eq:minclause}
x \circ_1 z_1 \vee \cdots \vee x \circ_l z_l,
\end{equation}
where for all $i \in [l]$, we have that $\circ_i$ is in $\{\geq, >\}$.
\end{theorem}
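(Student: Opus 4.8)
The plan is to follow the template already established for Theorem~\ref{thm:ppclosed}, adapting it to the operation $\min$ and the clause form~(\ref{eq:minclause}). The right-to-left direction should be routine: I would show that every relation $R_C$ defined by a single clause of the form $x \circ_1 z_1 \vee \cdots \vee x \circ_l z_l$ (with each $\circ_i \in \{\geq,>\}$) is preserved by $\min$. Assuming $t_1,t_2 \in R_C$ but $t_3 = \min(t_1,t_2) \notin R_C$, we would have $t_3(x) < t_3(z_i)$ with strict failure of each disjunct; since $\min$ is computed componentwise, $t_3(x) = \min(t_1(x),t_2(x))$ equals one of $t_1(x)$ or $t_2(x)$, say $t_1(x)$, and then one checks that the tuple achieving the minimum at $x$ must already violate the clause $C$, giving the contradiction. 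Care is needed to track which coordinate achieves the minimum and to handle the distinction between $\geq$ and $>$; the key point is that $\min(t_1(x),t_2(x)) < \min(t_1(z_i),t_2(z_i)) \le \max(t_1(z_i),t_2(z_i))$ forces the argument attaining the $x$-minimum to fail every disjunct.

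For the left-to-right direction I would again start from the fact, established in the discussion preceding the proof of Theorem~\ref{thm:ppclosed}, that $\bigwedge \Phi_R^m$ defines $R$, where $\Phi_R^m$ consists of minimal entailed formulas of the form~(\ref{eq:bounds}). The task is to replace each such minimal $\phi$ by a clause of the form~(\ref{eq:minclause}). A formula of the form~(\ref{eq:bounds}) negates a conjunction of conditions $x_1 \circ_1 x_2 \wedge \cdots$; the goal is to show that, when $R$ is $\min$-preserved, each minimal $\phi$ is in fact equivalent (modulo entailment by $R$, via Observation~\ref{obs:replacement}) to a clause all of whose disjuncts are of the form $x \ge z$ or $x > z$ sharing a common left variable $x$. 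I would isolate the variable $x$ playing the role of the minimum element in the chain described by $\phi$ and argue that the remaining structure can be collapsed into a single such clause.

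The main obstacle, and the place where min-closure genuinely enters, is establishing that the minimal entailed formula has the required shape, namely that all disjuncts can be taken with the \emph{same} distinguished variable $x$ on the left. The plan mirrors the contradiction step in Theorem~\ref{thm:ppclosed}: if a candidate clause $\psi$ of the form~(\ref{eq:minclause}) entailed by $R$ failed to entail $\phi$, or if $\phi$ could not be so rewritten, I would exhibit two tuples $t_1, t_2 \in R$ whose componentwise minimum $t_3 = \min(t_1,t_2)$ satisfies $\neg\phi$, contradicting that $\phi$ is entailed by $R$. Concretely, using minimality of $\phi$ in $(\Phi_R;\preceq)$ one locates proper subchains that are not entailed, hence witnessed by tuples in $R$; by applying automorphisms of $(\Q;<)$ to shift these witnesses so that their minima align at the critical coordinate, one builds the violating $t_3$. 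The delicate part is choosing the two witness tuples and the automorphisms so that taking the coordinatewise $\min$ simultaneously forces equality among the $y_i$ and strict inequality below the $z_j$, which is exactly the configuration ruled out by $\phi$; unlike the $\pp$ case, where the first argument's sign governs the outcome, here the symmetric and idempotent nature of $\min$ means both witnesses contribute and the alignment must be arranged on both tuples at once.
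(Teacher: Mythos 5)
Your right-to-left direction is correct: the observation that the argument of $\min$ attaining the minimum at coordinate $x$ must itself falsify every disjunct (using $t_3(z_i) \leq t_1(z_i)$) is sound, and is if anything a little cleaner than the paper's direct case analysis on whether $t_1(z_i) = t_2(z_j)$. The genuine gap is in the hard direction. First, the paper does not rewrite the chain-negations of $\Phi_R^m$ directly, and for good reason: the step of collapsing a negated chain of form~(\ref{eq:bounds}) into a single clause with a \emph{common} left-hand variable is exactly the content of Theorem~\ref{thm:ppclosed}, whose proof exploits the sign asymmetry of $\pp$ (projection behavior governed by whether the first argument is negative) — an asymmetry that, as you yourself note, $\min$ does not have. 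The paper instead invokes Lemma~23 of~\cite{tcsps-journal}, stating that every $\min$-closed temporal relation is also $\pp$-closed, and then applies Theorem~\ref{thm:ppclosed} to get a definition by clauses of the form~(\ref{eq:ppclausestrict}), with common left variable but possibly containing disequalities. Your plan contains no substitute for this reduction, and ``mirroring the contradiction step'' of the $\pp$ proof will not supply one.

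Second, and more seriously, your sketch for eliminating the disequalities hopes to ``exhibit two tuples $t_1, t_2 \in R$ whose componentwise minimum satisfies $\neg\phi$'' — but such a pair need not exist, and the proof cannot be a single contradiction. The paper's argument is a measure-decreasing case analysis: fix a definition minimizing the number of disequalities (then the number of literals); for a clause $C$ containing $x \neq y_1$, minimality yields a tuple falsifying all literals of $C$ except $x \neq y_1$. If no such witness with $x < y_1$ exists, one replaces $x \neq y_1$ by $x > y_1$, decreasing the measure. Crucially, $x < y_1$ is \emph{not} an admissible literal in form~(\ref{eq:minclause}), so the symmetric replacement is unavailable; this is why the paper passes to the equivalent swapped clause $C'$ with $y_1$ on the left, whose own minimality produces either another measure decrease (replace $y_1 \neq x$ by $y_1 > x$) or a second witness $t'_1$ with $y_1 < x$. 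Only when \emph{both} oppositely oriented witnesses $t_1$ (with $x < y_1$) and $t'_1$ (with $y_1 < x$) exist does one apply automorphisms sending $(t_1(x), t_1(y_1))$ to $(0,1)$ and $(t'_1(x), t'_1(y_1))$ to $(1,0)$ and take $\min$ to obtain a tuple falsifying all of $C$, contradicting $\min$-closure via Observation~\ref{obs:replacement} and entailment. This dichotomy — rewrite when a witness is missing, contradict when both are present — together with the swap trick for $C'$ is the heart of the proof, and it is absent from your proposal; without it the plan does not go through as stated.
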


\begin{proof}
Observe that to prove the left-to-right implication, it is enough to 
show that every relation $R_C$ defined by a single clause $C$
of the form~(\ref{eq:minclause}) is preserved by $\min$.
Let $t_1, t_2$ be some tuples in $R$ and $t_3 = \min(t_1, t_2)$.
Then $t_1, t_2$ satisfy $x \circ_i z_i$, and $x \circ_j z_j$ for some $i,j \in [l]$,
respectively. We consider two cases. 
Assume first that $t_1(z_i) = t_2(z_j)$. Here, it is easy to see that
$t_3(x) \geq t_3(z_i)$ and $t_3(x) \geq t_3(z_j)$. Thus if either $\circ_i$
or $\circ_j$ is $\geq$, then we are done. Otherwise, it holds that
$t_1(x) > t_1(z_i)$ and $t_2(x) > t_2(z_j)$. It follows that $t_3(x)$
is greater than both $t_3(z_i)$ and $t_3(z_j)$. This settles the case where
$t_1(z_i) = t_2(z_j)$.
Assume now without loss of generality that 
$t_1(z_i) < t_2(z_j)$. Observe that $t_3(z_i) \leq t_3(x)$.
Thus we are done when $\circ_i$ is $\leq$.
Otherwise we have that $t_1(z_i) < t_1(x)$ and that $t_1(z_i) < t_2(z_j) \leq t_2(x)$.
It follows that $t_3(z_i) < t_3(x)$. It completes the proof of the left-to-right
implication.

We now turn to prove the reverse implication. Assume that there is some relation $R$ preserved by $\min$ that
cannot be defined by a conjunction of clauses of the form~(\ref{eq:minclause}).
By Lemma~23 in~\cite{tcsps-journal}, we have that every relation preserved by $\min$ is also preserved by $\pp$.
Hence, by Theorem~\ref{thm:ppclosed}, the relation $R$ may be defined as a conjunction
of clauses of the form
\begin{equation}
\label{eq:ppclausestrict}
x \neq y_1 \vee \cdots \vee x \neq y_k \vee x \circ_1 z_1 \vee \cdots \vee x \circ_l z_l,
\end{equation}
where $\circ_i \in \{>, \geq \}$ for every $i \in [l]$. 
Consider the set of such definitions of $R$ with a minimal number of disequalities and from this set choose
one with a minimal number of literals. 
Denote that formula by $\phi_R$.

If $\phi_R$ does not have any disequalities, then we are done. Otherwise it
contains a clause $C$ of the form~(\ref{eq:ppclausestrict}) with at least one disequality $(x \neq y_1)$.
Since $\phi_R$ does not contain any unnecessary literals, there is a tuple in $R$ that falsifies all 
literals in $C$ except for $(x \neq y_1)$. Suppose that 
$R$ contains both tuples $t_1, t_2$ that falsify all
literals in $C$ except for $(x \neq y_1)$ and satisfy $(x < y_1)$ and 
$(x > y_1)$, respectively. Let  $\alpha_1$ be an automorphism of $(\Q; <)$
that sends $(t_1(x), t_1(y_1))$ to $(0,1)$, and 
$\alpha_2$ an automorphism of $(\Q; <)$
that sends $(t_2(x), t_2(y_1))$ to $(1,0)$.
Observe that $t_3 = \min(\alpha_1(t_1), \alpha_2(t_2))$ satisfies $(x = y_1)$. 
Since $\min$ preserves 
$=$, $\leq$, and $<$, we have that $t_3$ falsifies all literals in $C$,
hence we have a contradiction with the fact that $R$ is preserved by $\min$.
It follows that $R$ does not have either $t_1$ or $t_2$.
If $R$ does not contain $t_1$, then in $\phi_R$ we can replace $C$ by
$(x > y_1 \vee x \neq y_2 \vee\cdots \vee x \neq y_k \vee x \circ_1 z_1 \vee \cdots \vee x \circ_l z_l)$ 
obtaining a definition of $R$ guaranteed by Theorem~\ref{thm:ppclosed} with a smaller number of
disequalities. From now on we can therefore assume that $R$ contains $t_1$.

Consider now the formula $\phi'_R$ obtained from $\phi_R$ by replacing the clause $C$ 
by the clause
$C' := (y_1 \neq x \vee y_1 \neq y_2 \vee \cdots \vee y_1 \neq y_k \vee y_1 \circ_1 z_1 \vee \cdots \vee y_1 \circ_l z_l)$.
Observe that $C$ and $C'$ entail each other. 
Hence $\phi'_R$ also defines $R$. Consider $C'$ as a part of $\phi'_R$ and observe that
no literal from $C'$ can be removed. Indeed, the new definition would have less  disequlities or the same number
of disequalities and less literals than $\phi_R$. Thus $R$ contains a tuple $t'$ that satisfies 
all disjuncts of $C'$ except for $(y_1 \neq x)$. As in the previous papragraph, we argue that $R$ 
cannot have both tuples $t'_1$, and $t'_2$
that falsify all
literals in $C'$ except for $(x \neq y_1)$ and satisfy $(y_1 < x)$ and 
$(y_1 > x)$, respectively. If $R$ does not contain $t'_1$, then in $\phi'_R$ we can replace $C'$ by
$(y_1 > x \vee y_1 \neq y_2 \vee\cdots \vee y_1 \neq y_k \vee y_1 \circ_1 z_1 \vee \cdots \vee y_1 \circ_l z_l)$ 
obtaining a definition of $R$ in the form guaranteed by Theorem~\ref{thm:ppclosed} with a smaller number of
disequalities. Thus, we can assume that $R$ contains $t'_1$.

Now, suppose towards the contradiction that  $R$ has both: $t_1$ that falsifies all disjuncts of $C$ except for $(x \neq y_1)$
and satisfies $(x < y_1)$; and $t'_1$ that falsifies all disjuncts of $C'$ except for $(y_1 \neq x)$
and satisfies $(y_1 < x)$.
Let $\alpha, \alpha'$ be automorphisms of $\Aut(\Q;<)$
such that $\alpha$ sends $(t_1(x), t_1(y_1))$ to $(0,1)$ and $\alpha'$ sends $(t'_1(x), t'_1(y_1))$ to $(1,0)$.
To complete the proof we will show that $t_4 = \min(\alpha(t_1), \alpha'(t'_1))$ falsifies all disjuncts of $C$.
Since both $t_1$ and $t'_1$ are in $R$, we obtain a contradiction to the assumption that $R$ is preserved by 
$\min$.
Since $\alpha(t_1(x)) = \alpha(t_1(y_2)) = \cdots = \alpha(t_1(y_k)) = 0$ and $\alpha'(t'_1(y_1)) = \alpha'(t'_1(y_2)) = \cdots = 
\alpha'(t'_1(y_k)) = 0$, it follows that
$t_4(x) = t_4(y_1) = \cdots = t_4(y_k) = 0$ and hence $t_4$ falsifies all disequalities in $C$.
Now, the clause $C$ contains a disjunct $(x \circ_i z_i)$ with $\circ_i \in \{>, \geq \}$ and $i \in [k]$ if and only if 
$C'$ contains $(y_1 \circ_i z_i)$. Assume that $\circ_i$ is $>$. The same argument will work for $\geq$.
Observe that $\alpha(t_1)$ satisfies $(x \leq z_i)$ and sends $x$ to $0$;
and that $\alpha'(t'_1)$ satisfies $(y_1 \leq z_i)$ and sends $y_1$ to $0$. It follows that $t_4(x) \leq 0 \leq t_4(z_i)$.
Thus $t_4$ falsifies $(x < z_i)$ and we are done.
\end{proof}

\vspace{.4cm}
{\bf Example.} According to Theorem~\ref{thm:minclosed}, the relation $U$ from Section~\ref{sect:tcsps} 
can be defined by a conjunction of clauses of the form~\ref{eq:minclause}. Indeed,
observe that $((x \geq y \vee x \geq z) \wedge y \geq x \wedge z \geq x)$
defines $U$.

\subsection{Characterization of $\mx$-closed relations}
\label{sect:mx}

Let us say that a boolean relation
$S \subseteq \{ 0, 1 \}^n$
 is \emph{near-affine} if it holds that
$S \cup \{ (1, \ldots, 1) \}$ is preserved by the operation
$a(x, y) = x \oplus y \oplus 1$.
Let us say that a formula $\phi$ 
is \emph{min-affine}
if there exists a near-affine relation $T \subseteq \{ 0, 1 \}^n$
such that $\phi$ is of the form
$$\bigvee_{t \in T} \big (
(\bigwedge_{i, j : t_i = t_j = 0} x_i = x_j) \wedge
(\bigwedge_{i, j : t_i = 0, t_j = 1}  x_i < x_j)
\big )$$
where $x_1, \ldots, x_n$ are variables;
here, the formula $\phi$ is intended to be interpreted over
the ordered rationals. 

\begin{theorem}
\label{thm:mxclosed}
A temporal relation 
$R \subseteq \Q^k$ 
is preserved by $\mx$ if and only if it is
definable by a formula $\phi(v_1, \ldots, v_k)$ that is
the conjunction of min-affine formulas, each of which is over
a subset of $\{ v_1, \ldots, v_k \}$.
\end{theorem}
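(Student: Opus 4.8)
The plan is to translate the whole problem into the combinatorics of \emph{minimum-patterns}. For a tuple $s \in \Q^n$ let $\chi(s) \in \{0,1\}^n$ be the indicator of the non-minimal coordinates, i.e.\ $\chi(s)_i = 0$ exactly when $s_i = \min_j s_j$; note $\chi(s)$ is never the all-ones tuple. Unwinding the disjuncts shows that a min-affine formula built from a near-affine $T$ is satisfied by $s$ if and only if $\chi(s) \in T$. The engine of both directions is a single computation of $\chi(\mx(s,s'))$ in terms of $c = \chi(s)$ and $c' = \chi(s')$: since the order on $\mx$-values is the lexicographic order on the pair $(\min(s_i,s'_i), [s_i = s'_i])$ (with an unequal pair, producing $\alpha(\min)$, ranking below an equal pair, producing $\beta(\min)$), one reads off that $\chi(\mx(s,s')) = c$ when $\min s < \min s'$, symmetrically $=c'$ when $\min s > \min s'$, $=c$ when $\min s = \min s'$ and $c=c'$, and $=a(c,c')$ when $\min s = \min s'$ and $c \neq c'$ (recall $a(x,y)=x\oplus y\oplus 1$, so $a(c,c')_i = 0$ exactly where $c_i \neq c'_i$). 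I would isolate this as a lemma.

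For the right-to-left direction it suffices to treat a single min-affine formula over all its variables, since a min-affine formula over a subset defines a cylinder, and $\mx$-closed relations are closed under cylindrification and intersection (both because $\mx$ commutes with projection). Let $R_T = \{ s : \chi(s) \in T \}$ with $T$ near-affine and take $s, s' \in R_T$, so $c, c' \in T$. In the three cases where $\chi(\mx(s,s'))$ equals $c$ or $c'$ we are immediately done; in the remaining case $\min s = \min s'$, $c \neq c'$ we get $a(c,c')$, which lies in $T \cup \{(1,\dots,1)\}$ by near-affineness and is not the all-ones tuple precisely because $c \neq c'$, hence lies in $T$. Thus $R_T$ is preserved by $\mx$.

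For the converse I would, for every nonempty $S \subseteq \{1,\dots,k\}$, set $T_S = \{ \chi(\mathrm{proj}_S(t)) : t \in R \}$ and let $\phi_S$ be the corresponding min-affine formula over $S$. Each $T_S$ is near-affine: given $c,c' \in T_S$ witnessed by $t,t' \in R$, choose $\alpha \in \Aut(\Q;<)$ with $\min_S \alpha(t') = \min_S t$; as $\chi$ is invariant under automorphisms and $R$ is closed under $\mx$ and $\Aut(\Q;<)$, the equal-minimum case of the key lemma gives $a(c,c') = \chi(\mathrm{proj}_S(\mx(t,\alpha(t')))) \in T_S$ when $c \neq c'$, while $c=c'$ yields the all-ones tuple; together with the identities $a(c,(1,\dots,1))=c$ and $a((1,\dots,1),(1,\dots,1))=(1,\dots,1)$ this is exactly closure under $a$. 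It then remains to show that $\bigwedge_S \phi_S$ defines $R$. The inclusion $R \subseteq \bigcap_S R_{\phi_S}$ is immediate from the definition of $T_S$, so the real work is the reverse inclusion.

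This reverse inclusion is the main obstacle, and I would prove it by induction on the number $r$ of blocks of the weak order (order type) of a tuple $t$ with $\chi(\mathrm{proj}_S t) \in T_S$ for all $S$; since the order type of a tuple is determined by its minimum-patterns on pairs and $R$ is a union of orbits, it suffices to produce \emph{some} tuple of $R$ with the order type of $t$. For $r=1$ the pattern $\chi(t)=(0,\dots,0)\in T_{[k]}$ is witnessed by a constant tuple of $R$. For the step, let $A$ be the bottom block and $S' = \{1,\dots,k\}\setminus A$; the projection $\mathrm{proj}_{S'}(R)$ is again $\mx$-closed with the same $T_S$ for $S \subseteq S'$, so by induction $R$ contains $u$ with $\mathrm{proj}_{S'}(u)$ of the order type of $\mathrm{proj}_{S'}(t)$, and the witness for $S=\{1,\dots,k\}$ gives $z \in R$ whose strict bottom block is exactly $A$. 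The crux is to glue $z$ and $u$ with one application of $\mx$: choosing $g,h \in \Aut(\Q;<)$ so that $h(u)$ takes all its values in $(1,2)$ while $g(z)$ sends $A$ to $0$ and its remaining coordinates above $2$, the tuple $\mx(g(z),h(u)) \in R$ has all $A$-coordinates equal and strictly below the $S'$-coordinates, and on $S'$ reproduces the order type of $u$ — hence the order type of $t$. I expect the delicate points to be checking that these automorphisms exist (using that $A$ is a strict bottom block of $z$) and the coordinatewise bookkeeping of the $\mx$-values, but no case analysis beyond the key lemma should be required.
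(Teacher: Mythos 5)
Your proposal is correct and follows essentially the same route as the paper: the identical min-pattern case analysis for the right-to-left direction (your ``key lemma'' is the paper's three cases $m<m'$, $m=m'$ with $t=t'$, and $m=m'$ with $t\neq t'$ made explicit), and for the converse the same induction that peels off the block of minimal coordinates, realizes the projection onto the remaining coordinates inductively, finds a witness $z\in R$ of the minimum pattern via near-affineness of the pattern set, and glues $z$ and the inductively obtained tuple with a single application of $\mx$ after normalizing by automorphisms — the only differences being cosmetic (you induct on the number of blocks rather than on the arity, your subset-indexed conjunction $\bigwedge_S \phi_S$ is equivalent to the paper's conjunction of all entailed min-affine formulas, and you prove the near-affineness of $T_S$ directly from your key lemma where the paper cites Lemma~28 of~\cite{tcsps-journal}, a mild gain in self-containedness). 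One trivial caveat: restricting to nonempty $S$ misses the degenerate case $k=0$, $R=\emptyset$, which the paper handles as its base case via the false min-affine formula on no variables with $T=\emptyset$; for $k\geq 1$ your construction treats the empty relation correctly since then $T_{\{1,\ldots,k\}}=\emptyset$.
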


For a tuple $b = (b_1, \ldots, b_n) \in \Q^n$, 
we define the \emph{min-tuple} of $b$ to be the tuple
$t = (t_1, \ldots, t_n) \in \{ 0, 1 \}^n$ such that
$t_i = 0$ if and only if $b_i$ is the minimum value in $\{ b_1,
\ldots, b_n \}$; note that the min-tuple of a $\Q$-tuple of nontrivial
arity
always
contains at least one entry equal to $0$. Observe also that a tuple $b \in \Q^n$
is in the relation defined by a min-affine formula if and only if the min-tuple of $b$
is in $T$.

We remark that we permit the empty conjunction of min-affine formulas,
which we consider to be true, and we also permit min-affine formulas
where the near-affine relation $T$ is empty, which formulas are
considered to be false.

\begin{proof}
For the right-to-left direction, 
it suffices to show that the claim holds in the case that $\phi$
is defined by a single min-affine formula.
So suppose that $\phi$ is of the form described above,
with respect to $T \subseteq \{ 0, 1 \}^n$.
Let $b = (b_1, \ldots, b_n)$, $b' = (b'_1, \ldots, b'_n)$ be 
two tuples in $\Q^n$ that satisfy $\phi$.
Let $t, t' \in T$ be the min-tuples of $b, b'$, respectively;
so, $t, t'$ 
witness that $b, b'$ satisfy $\phi$.
Let $m, m'$ denote the minimum values in the tuples $b, b'$,
respectively.

We consider two cases. 

Case $m = m'$.
If $t = t'$, 
then the tuple $\mx(b, b')$ is witnessed to satisfy $\phi$
via the tuple $t = t'$; this can be verified by the definition of $\mx$.
If $t \neq t'$, then 
we claim that the tuple
$\mx(b, b')$ 
is witnessed to satisfy $\phi$ via the tuple
$t \oplus t' \oplus 1$.  
First, observe that $t \oplus t' \oplus 1$ is contained in 
$T \cup \{ (1, \ldots, 1) \}$ by the assumption that this relation
is preserved by the operation $a$; since $t \neq t'$, it holds that
$t \oplus t' \oplus 1$ is not equal to $(1, \ldots, 1)$ and is thus
contained in $T$. 
Next, it is straightforwardly verified from the definition of $\mx$
that $\mx(b, b')$ takes on its minimum value at exactly the coordinates
where one of $b, b'$ takes on its minimum value and the other does not,
which are precisely the coordinates where $t \oplus t' \oplus 1$
is equal to $0$.

Case $m \neq m'$.  We assume for the sake of notation
that $m < m'$.  In this case, the minimum entries among all entries
in $b$ and $b'$ are the entries in $b$ that have value $m$.
It follows from the definition of $\mx$ that
$\mx(b, b')$ takes on its minimum value at exactly the coordinates
where $b$ does so, and hence the tuple $\mx(b, b')$ satisfies $\phi$
via the tuple $t \in T$.

For the left-to-right direction, we proceed by induction on the 
arity $k$ of the relation $R \subseteq \Q^k$.
The case $k = 0$ is verified as follows: if $R$ is empty, then take
the empty conjunction;
if $R$ is non-empty, then take the conjunction consisting of 
the single min-affine formula on no variables with $T = \emptyset$.
In what follows, we assume that $k > 0$.
Let $\phi$ be the conjunction of all min-affine formulas 
that are entailed by $R(v_1, \ldots, v_k)$.
Suppose that $b = (b_1, \ldots b_k) \in \Q^k$ satisfies $\phi$.
Our goal is to show $b \in R$.
Let $t \in \{ 0, 1 \}^k$ be the min-tuple of $b$.

We claim that there is a tuple $c \in R$ with min-tuple $t$.
Let $T \in \{ 0, 1 \}^k$ be the set of min-tuples of tuples in $R$.
It follows directly from (Lemma~28 of~\cite{tcsps-journal}) that if
$s, s' \in T$, then $s \oplus s' \oplus 1 \in T \cup \{ (1, \ldots, 1)
\}$.
 From this implication, it is straightforward to verify that $T$ is
 near-affine.
Hence, the conjunction $\phi$ contains as a conjunct 
a min-affine formula $\alpha$ with relation $T$.  
Since $b$ satisfies $\alpha$, it follows that $t \in T$,
from which the claim follows.

If $t = (0, \ldots, 0)$, then $c \in R$ and $b$ are both constant
tuples.
Since $b$ is equal to $c$ under an automorphism of $(\Q; <)$,
we have $b \in R$.
So, we suppose that $t$ contains $1$ as an entry;
for the sake of notation, we assume that $t$ has the form
$(1, \ldots, 1, 0, \ldots, 0)$, where the first $m$ entries are $1$;
here, $0 < m < k$.
We have that $(b_1, \ldots, b_m)$ satisfies all min-affine formulas
on variables from $v_1, \ldots, v_m$ which are entailed by
$R(v_1, \ldots, v_k)$.  
Hence, by induction, it holds that 
$(b_1, \ldots, b_m) \in \pi_{1,  \ldots, m} R$, and that there exists
a tuple of the form
$(b_1, \ldots, b_m, d)$ in $R$, where $d$ is a tuple of length $(k-m)$.
We can apply an automorphism to the tuple
$(b_1, \ldots, b_m, d)$ 
to obtain a tuple
$(b'_1, \ldots, b'_m, d') \in R$ where all entries are positive.
Also, by applying an automorphism to $c \in R$,
we can obtain a tuple of the form
$(c'_1, \ldots, c'_m, 0, \ldots, 0)$ where, for all coordinates $i$
from
$1$ to $m$, it holds that $c'_i > b'_i$.
Applying $\mx$ to the tuples
$(b'_1, \ldots, b'_m, d')$
and
$(c'_1, \ldots, c'_m, 0, \ldots, 0)$,
one obtains the tuple
$(\mx(b'_1, c'_1), \ldots, \mx(b'_m, c'_m),
\mx(d', \overline{0}))$,
where $\overline{0}$ is a tuple of length $(k-m)$ with all entries equal to $0$,
which is equivalent to $b$ under an automorphism.
\end{proof}

\vspace{.4cm}
{\bf Example.} Observe that the relation $X$ closed under $\mx$ presented in Section~\ref{sect:tcsps}
is defined there as a conjunction of  min-affine formulas.

\section{Tractability Results}

In this section, we prove the following theorem.

\begin{theorem}
Let $\Gamma$ be a temporal constraint language
 preserved by an operation 
$h \in \{ \minop, \mxop \}$.
The problem $\qcsp(\Gamma)$ is polynomial-time decidable.
\end{theorem}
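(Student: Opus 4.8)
The plan is to prove tractability for each operation $h \in \{\minop,\mxop\}$ separately, exploiting the syntactic characterizations (Theorem~\ref{thm:minclosed} and Theorem~\ref{thm:mxclosed}) to reduce a QCSP instance to a manageable form. Since $\Gamma$ is preserved by $h$, every constraint in a given instance is a conjunction of clauses of the special shape guaranteed by the relevant characterization; I would first preprocess the instance so that each constraint is replaced by the explicit conjunction of such clauses, yielding an equivalent instance whose matrix is a conjunction of clauses over the prefix $Q_1 v_1 \cdots Q_n v_n$.

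**The min case.** For $\min$, by Theorem~\ref{thm:minclosed} every clause has the form $x \circ_1 z_1 \vee \cdots \vee x \circ_l z_l$ with each $\circ_i \in \{\geq, >\}$; crucially, there is a single distinguished variable $x$ that is lower-bounded by the others. The key structural observation I would establish is a \emph{collapsing} property of universal variables: because $\min$ witnesses satisfaction by simultaneously decreasing coordinates, a universally quantified variable behaves adversarially precisely by being sent to a very small value. Concretely, I would argue that a universal variable $v_j$ can always be assumed to take a value below all existentially quantified variables appearing earlier, and that the worst case for satisfaction of each clause is captured by a finite collection of order-type assignments. The plan is then to show that $\qcsp$ reduces, after this analysis, to checking a polynomially-bounded set of implications among the variable values that can be verified by a syntactic, Datalog-like consistency procedure propagating lower bounds forced by the clauses. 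The main lemma to prove is that the instance is true if and only if this propagation does not derive a contradiction against a universally quantified variable.

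**The mx case.** For $\mxop$, I would use Theorem~\ref{thm:mxclosed}: each constraint is a conjunction of min-affine formulas, and a $\Q$-tuple satisfies a min-affine formula exactly when its min-tuple lies in the associated near-affine relation $T$. This lets me \emph{lift} the problem to the Boolean level: the behavior of $\mxop$ on min-tuples mirrors the affine operation $a(x,y)=x\oplus y\oplus 1$ on $T \cup \{(1,\ldots,1)\}$, as already exploited in the proof of Theorem~\ref{thm:mxclosed}. The plan is to show that solving $\qcsp(\Gamma)$ reduces to solving a quantified constraint problem over the Boolean near-affine relations together with the strict-order information, and that the latter can be handled by combining the tractability of affine (Maltsev-type) QCSPs, item (5) of the introduction, with an order-consistency check. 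I would make the reduction precise by tracking, for each variable, both its Boolean min-tuple coordinate and its relative order, and showing these two layers can be decided independently and in polynomial time.

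**Main obstacle.** The hardest part will be the universal quantifier handling, which is genuinely new relative to the CSP tractability of Proposition~\ref{prop:tractability}. For an existential instance one may take a canonical minimal solution, but with universal variables one must certify satisfaction against \emph{all} adversarial instantiations, and over the infinite domain $\Q$ the adversary has infinitely many choices of value and of order type. The crux is therefore to prove a finiteness reduction: that it suffices to consider the universal variables ranging over a finite, polynomially-describable set of order-theoretic positions (relative to the already-quantified variables), so that the alternating evaluation collapses to a polynomial-time procedure. Establishing this finiteness — and verifying that the canonical adversarial choices are indeed worst-case with respect to the clause structure forced by $\min$ and by $\mxop$ — is where the real work lies; the subsequent complexity bounds should follow routinely once the correct canonical game-tree is identified.
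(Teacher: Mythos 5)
Your proposal is a plan rather than a proof: both cases end by deferring the central step (``the main lemma to prove is\dots'', ``where the real work lies''), and that central step is precisely what the paper supplies and what your sketch gets wrong in its one concrete claim. For $\min$ you assert that a universal variable can be assumed to go \emph{below} all earlier variables. This is false. Take the $\min$-closed clause $z_1 \geq y \vee z_1 > z_2$ and the formula $\Phi(z_1,z_2) = \forall y\, (z_1 \geq y \vee z_1 > z_2)$: its satisfying assignments are exactly those with $z_1 > z_2$ (a large $y$ kills the first disjunct), yet if you only test $y$ below $z_1$ and $z_2$, the disjunct $z_1 \geq y$ is always available and \emph{every} assignment passes. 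The correct collapse goes in the opposite direction: Lemma~\ref{lemma:eliminate-universal} of the paper shows that one may replace $\forall y$ by $\exists y$ constrained \emph{above} all earlier variables, and even this equivalence is only valid conditionally on the universal formula being satisfiable --- which is why the paper's algorithm, after solving the relaxed existential formula, must additionally verify the resulting assignment against the $\forall\exists$ block before recursing. The proof of that lemma uses the polymorphism directly: for $h \in \{\min,\mx\}$ there is an order-embedding $\alpha$ with $h(q,p)=\alpha(q)$ for all $p>q$, so applying $h$ to a witness with $y$ large and a witness for the satisfiable universal statement produces, as the second argument varies, witnesses for $y$ in \emph{every} orbit relative to the other values. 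Nothing in your propagation/Datalog sketch substitutes for this; indeed, even the existential ($\Csp$) fragment of $\min$-closed languages is not solvable by Datalog-style propagation (cf.\ the discussion of $x>y \vee x>z$ in the introduction), so the claimed ``lower-bound propagation'' main lemma would fail already without universal quantifiers.

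Two further gaps are worth naming. First, neither of your cases addresses the alternation blow-up: enumerating finitely many order-type positions per universal variable, even if the finiteness claim were proved, yields an exponential game tree under $\forall\exists\forall\exists\cdots$ alternation; the paper avoids this by eliminating quantifier blocks innermost-first, converting each $\forall y_i \exists x_i$ block into a purely existential formula of polynomially bounded size (each elimination adds only the order constraints $x_j < y_i$, $y_j < y_i$), so that every check is a single $\Csp(\Gamma)$ instance solved by the algorithms of the underlying CSP classification. Second, in the $\mx$ case your claim that the Boolean near-affine layer and the order layer ``can be decided independently'' is unjustified: the min-affine conjuncts range over different subsets of the variables, and the positions of minima across overlapping subsets interact with the order information, which is exactly the coupling a correct proof must control. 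Note also that the paper's tractability argument never uses the syntactic characterizations (Theorems~\ref{thm:minclosed} and~\ref{thm:mxclosed}) at all --- it works uniformly for both operations from the single identity $h(q,p)=\alpha(q)$ for $p>q$ --- so the case split your strategy is built on is avoidable, and in your version it is the source of both unproved lemmas.
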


Let $\Gamma$ be a relational $\tau$-structure.
Formulas of the form 
$$Q_1 v_1 \dots Q_n v_n (\psi_1 \wedge \dots \wedge \psi_m)$$ where each $\psi_i$ is of the form $R(y_1,\dots,y_k)$
for $R \in \tau$, and where $Q_1,\dots,Q_n \in \{\exists,\forall\}$, will be called \emph{quantified constraint formulas (over $\Gamma)$}. Existentially quantified variables are typically (but not exclusively) denoted by $x,x_1,x_2,\dots$, and
universally quantified variables by $y,y_1,y_2,\dots$.

\begin{lemma}
\label{lemma:eliminate-universal}
Let $\Gamma$ be a temporal constraint language preserved by an operation 
$h \in \{ \minop, \mxop \}$.
Let $\Phi(z_1, \ldots, z_m) = \forall y \exists x_1 \ldots \exists x_n \phi$
be a quantified constraint formula over $\Gamma$ having free variables $\{ z_1, \ldots, z_m \}$.
Let $\Phi'(z_1, \ldots, z_m)$ be the formula
$\exists y \exists x_1 \ldots \exists x_n (\phi \wedge \bigwedge_{i \in [m]} (z_i < y))$.
If the formula $\Phi$ is satisfiable,
then the formulas $\Phi$, $\Phi'$ have the same satisfying assignments over $\Gamma$.
\end{lemma}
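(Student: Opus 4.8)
The plan is to work with the relation $S \subseteq \Q^{m+1+n}$ defined by $\phi$ on the variables $z_1, \ldots, z_m, y, x_1, \ldots, x_n$. Since $\phi$ is a conjunction of atomic $\Gamma$-constraints and $\Gamma$ is preserved by $h$, the relation $S$ is preserved by $h$ as well, and being a temporal relation it is also preserved by every member of $\Aut(\Q;<)$. In this notation $\Phi(z) = \forall y\, \exists x_1 \cdots \exists x_n\, S(z, y, x_1, \ldots, x_n)$ and $\Phi'(z) = \exists y\, \exists x_1 \cdots \exists x_n\, (S(z, y, x_1, \ldots, x_n) \wedge \bigwedge_{i \in [m]} z_i < y)$. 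I would show that, assuming $\Phi$ is satisfiable, an assignment $a = (a_1, \ldots, a_m)$ satisfies $\Phi$ iff it satisfies $\Phi'$. The direction from $\Phi(a)$ to $\Phi'(a)$ needs neither the polymorphism nor satisfiability: instantiating the universal $y$ in $\Phi(a)$ by any value $c > M := \max_i a_i$ yields witnesses with $S(a, c, \cdots)$ and $a_i < c$ for all $i$, which are exactly witnesses for $\Phi'(a)$.

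The substantial direction is from $\Phi'(a)$ to $\Phi(a)$, and here I would first fix a $\Phi'$-witness, i.e.\ a value $c > M$ and a tuple $b$ with $S(a, c, b)$, together with an assignment $a^*$ satisfying $\Phi$ (guaranteed by satisfiability). Given an arbitrary target value $d$ for $y$, the goal is to produce a witness with $S(a, d, \cdot)$. If $d > M$, then $c$ and $d$ lie above all entries of $a$, so there is an automorphism fixing each $a_i$ and sending $c \mapsto d$; applying it to $(a,c,b)$ gives the required witness. The interesting case is $d \le M$, where $d < c$, and I would combine two tuples via $h$. The first is $u := (a, c, b)$. For the second I would exploit that $\Phi(a^*)$ is universal in $y$: choosing a value $e$ strictly below all entries of $a^*$ yields a tuple $(a^*, e, q) \in S$, and by homogeneity of $(\Q;<)$ there is an automorphism $\beta$ with $\beta(e) = d$ that blows up the entries of $a^*$ to values all exceeding $M$ (possible since $e$ lies below all of them and $d \le M$ is below the chosen targets). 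This produces a second tuple $w := (\beta(a^*), d, \beta(q)) \in S$ whose $z$-entries are all strictly above the corresponding $a_i$ and whose $y$-entry is $d < c$.

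Applying $h$ componentwise to $u$ and $w$ gives a tuple in $S$ by preservation. On each $z_i$-coordinate the two carry $a_i < \beta(a^*)_i$, and on the $y$-coordinate they carry $d < c$, so on all these coordinates the arguments are distinct; hence $h$ returns $g(\min(\cdot,\cdot))$ there, where $g$ is the identity for $h = \min$ and $g = \alpha$ for $h = \mx$. In either case $g$ preserves $<$ and is injective, so the restriction of $h(u,w)$ to the $(z,y)$-coordinates is $(g(a_1), \ldots, g(a_m), g(d))$, which has the same order type as $(a_1, \ldots, a_m, d)$. By homogeneity there is an automorphism $\gamma$ sending the former tuple back to the latter; applying $\gamma$ to $h(u,w)$ yields a tuple in $S$ whose $(z,y)$-part is exactly $(a, d)$, i.e.\ a witness $S(a,d,\cdot)$. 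Since $d$ was arbitrary, $\Phi(a)$ holds.

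The hard part will be the construction of the second tuple $w$: one must realize an arbitrary small target $d$ as the $y$-value of a tuple in $S$ while simultaneously forcing its $z$-entries above all of $a$, so that $h$ selects $a_i$ on the $z$-coordinates and $d$ on the $y$-coordinate. This is exactly where the universal quantification in the globally guaranteed satisfying assignment $a^*$ and the homogeneity of $(\Q;<)$ are used, which is why the satisfiability hypothesis cannot be dropped. The only difference between the $\min$ and $\mx$ cases is the order-preserving distortion $g = \alpha$ introduced by $\mx$, which is undone by the final automorphism $\gamma$; for $\min$ one has $g = \gamma = \mathrm{id}$ and the combined tuple already equals the desired witness.
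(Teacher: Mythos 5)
Your proof is correct and takes essentially the same route as the paper's: both merge, via the polymorphism $h$, the $\Phi'$-witness (whose $y$-value exceeds all $z$-entries) with an automorphic image of the guaranteed satisfying assignment whose $z$-entries are pushed above, exploit that $h(q,p)$ is an injective, order-preserving function of $q$ whenever $p > q$, and undo this distortion by homogeneity of $(\Q;<)$. The only difference is presentational: the paper realizes all universal values of $y$ at once through an orbit argument on the last coordinate (varying $d$ against the fixed points $\alpha(f_i)$), whereas you treat each target $d$ separately with the case split $d > M$ versus $d \le M$ --- mechanically equivalent bookkeeping.
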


\begin{proof}
Observe that for every $q \in \Q$ there exists 
a value $\alpha(q)$ such that 
$h(q, p) = h(p, q) = \alpha(q)$ for all $p > q$.
Notice that $\alpha(q)$ is injective and preserves $<$.

Let $\Psi$ be the formula
$\exists x_1 \ldots \exists x_n \phi$ with free variables
$\{ z_1, \ldots, z_m, y \}$.
Assume that $\Phi$ is satisfiable; then, there exists a tuple
$(g_1, \ldots, g_m) \in \Q^m$ such that, for all $d \in \Q$,
the tuple $(g_1, \ldots, g_m, d)$ satisfies $\Psi$.
Clearly, all satisfying assignments of $\Phi$ are satisfying assignments
of $\Phi'$.  We thus need to show that each satisfying assignment
of $\Phi'$ is a satisfying assignment of $\Phi$.

Let $(f_1, \ldots, f_m) \in \Q^m$ be a satisfying assignment of $\Phi'$.
By the definitions of the formulas, there exists a value $b \in \Q$
with $b > f_i$ for all $i \in  [m]$ such that
$(f_1, \ldots, f_m, b)$ satisfies $\Psi$.
By applying a suitable automorphism to $(g_1, \ldots, g_m)$, we may assume that
$f_i < g_i$ for all $i \in [m]$.
Now consider 
$h((f_1, \ldots, f_m, b), (g_1, \ldots, g_m, d))$
where $d \in \Q$.
In the first $m$ coordinates, this tuple is equal to 
$(\alpha(f_1), \ldots, \alpha(f_m))$.
In the last coordinate, this tuple is equal to $h(b, d)$;
by varying $d$, the value $h(b, d)$ can take on values from every
\emph{orbit},
where here we understand orbit to be with respect to
the 
group consisting of 
the automorphisms of $(\Q, <)$ that fix the values $\alpha(f_i)$.
It follows that $(\alpha(f_1), \dots, \alpha(f_m))$ 
satisfies $\Phi$; since $\alpha^{-1}$ preserves all first-order
definable relations, we conclude that $(f_1, \dots, f_m)$ satisfies $\Phi$.
\end{proof}

With this lemma in hand, we now prove the theorem.
Let $\Gamma$ be a structure that is preserved by
$\minop$ or by $\mxop$.
We now present an algorithm for $\qcsp(\Gamma)$.
For the sake of notation, we assume that the input instance is of the
form
$\forall y_1 \exists x_1 \ldots \forall y_n \exists x_n \phi$,
that is, it exhibits a strict alternation between the two quantifier
types.  The result for the general case can be obtained, for example,
by considering an algorithm that inserts dummy variables/quantifiers into an
arbitrary instance of $\qcsp(\Gamma)$ to massage it into the described
form, and then passes to the algorithm that we describe.
We make use of the fact
 that CSP$(\Gamma)$ can be decided by the algorithms given in~\cite{tcsps-journal};
indeed, those algorithms describe how to compute satisfying
assignments in the event that they exist.

\begin{figure}[h]
\begin{center}
\small
\fbox{
\begin{tabular}{l}
Algorithm for $\qcsp(\Gamma)$ where $\Gamma$ is preserved by min/mx. \\
\\
{\rm Input: an instance
$\forall y_1 \exists x_1 \ldots \forall y_n \exists x_n \phi$
of $\qcsp(\Gamma)$.} \\
\\
Set $\Psi_n(y_1, x_1, \ldots, y_n, x_n) = \phi$.\\
\\
Loop $i = $ $n$ to $1$\\
\hspace{.5cm} Let $\Phi_i = \forall y_i \exists x_i \Psi_i$\\
\hspace{.5cm} Let $\Phi'_i$ be $\exists y_i \exists x_i 
\big(\Psi_i \wedge (x_1 < y_i \wedge \cdots \wedge x_{i-1} < y_i)
\wedge (y_1 < y_i \wedge \cdots \wedge y_{i-1} < y_i))$
\\

\hspace{.5cm} If $\Phi'_i$ is satisfiable\\

\hspace{1cm}Let $w$ be a satisfying assignment\\

\hspace{.5cm} Else\\
\hspace{1cm} Return False\\

\hspace{.5cm} If $w$ does not satisfy $\Phi_i$,\\

\hspace{1cm}  Return False\\

\hspace{.5cm} [Comment: $\Phi_i$, $\Phi'_i$ are equivalent
and both satisfiable]\\

\hspace{.5cm} Let $\Psi_{i-1} = \Phi'_i$\\

End Loop\\
\\
Return True
\end{tabular}}
\end{center}
\end{figure}

We now discuss the correctness of this algorithm.  
The algorithm maintains the invariant that, at the beginning of each loop,
the formula $\Psi_i(y_1, x_1, \ldots, y_i, x_i)$ is equivalent to 
the formula 
$\forall y_{i+1} \exists x_{i+1} \ldots \forall y_n \exists x_n \phi$.
(By equivalent, we mean that the formulas have the same satisfying
assignments over $\Gamma$.)
This is certainly true for $i = n$ by the initialization of $\Psi_n$.

We now consider the behavior of the algorithm for an execution of the
loop.
If the formula $\Phi'_i$ is not satisfiable, it follows that
the formula $\Phi_i$ is not satisfiable
(as any satisfying assignment for $\Phi_i$ is clearly one for
$\Phi'_i$), and hence that the original sentence is false;
the algorithm hence reports ``False'' correctly.

In the case that $\Phi'_i$ is found to be satisfiable,
consider first the case that the satisfying assignment $w$ for $\Phi'_i$ does not
satisfy $\Phi_i$. It follows 
by Lemma~\ref{lemma:eliminate-universal}
that $\Phi_i$ is not satisfiable, and hence that the original sentence is
false; the algorithm hence reports ``False'' correctly.

In the case that the assignment $w$ does satisfy $\Phi_i$, 
by Lemma~\ref{lemma:eliminate-universal},
the formulas $\Phi_i$, $\Phi'_i$ are equivalent.
Consequently, when $\Psi_{i-1}$ is set equal to $\Phi'_i$,
by the definition of $\Phi_i$ and the fact that the
invariant held at the beginning of the loop, the invariant 
will hold for the next execution of the loop.
In the case that $i = 1$, if the loop does not return false,
by the given argumentation, we have that both $\Phi_1$ and $\Phi'_1$
are true, and hence that the original sentence is true.

\section{Discussion}
The polynomial-time algorithms for the QCSP of temporal constraint languages that are preserved by $\min$ or preserved by $\mx$ that we have seen in this paper
properly strengthen previously known algorithmic results
for the QCSP:
\begin{itemize}
\item In~\cite{CharatonikWronaLPAR}, it has been shown that instances of the QCSP where all constraints are of the form $x \geq y_1 \vee \cdots \vee x \geq y_k$ can be solved in polynomial time. All such constraints are preserved by $\min$  (this follows immediately from the syntactic characterization that we gave for temporal relations that are preserved by $\min$). 
\item In~\cite{collaps}, it has been shown that 
the QCSP for all temporal constraint languages preserved by $\min$ is in NP.
\end{itemize}

There are (up to duality) two more 
temporal constraint languages whose CSP is maximally tractable~\cite{tcsps-journal}.
One of them is characterized by a certain binary polymorphism called $\lele$. The tractability for the CSP for $\lele$-closed languages has been shown in~\cite{ll}. The
QCSP for $\lele$-closed languages is coNP-hard~\cite{qecsps}, so 
this is an instance where maximal tractability doesn't transfer to the
QCSP.
There is the temporal constraint language that contains all relations preserved by the binary operation $\mi$ (for a definition, see~\cite{tcsps-journal}). 
In this case, the tractability of the QCSP is open.
In fact, already the following is open.

\begin{question}
Is $\Qcsp(({\mathbb Q};\{(x,y,z) \; | \; x = y \Rightarrow y \leq z\}))$ in P?
\end{question}
The relation defined by $x = y \Rightarrow y \leq z$ is indeed
preserved by $\mi$ (see~\cite{tcsps-journal}).
Finally, the temporal languages preserved by a constant
operation have a tractable CSP~\cite{tcsps-journal}.
As shown in previous work~\cite{CharatonikWronaLPAR,CharatonikWronaCSL}, the QCSP for such languages may be
contained in LOGSPACE, be NLOGSPACE-complete,
be P-complete, be NP-complete, or be PSPACE-complete.

Another field of open problems is related to the question
how bad the complexity of the QCSP becomes when
we are outside the maximal tractable classes. From maximal
tractability of the CSP we only obtain NP-hardness, but
very often the problem might become even PSPACE-hard.
Very often, questions in this context lead to the following
notorious open problem.

\begin{question}
Is $\Qcsp(({\mathbb Q};\{(x,y,z) \; | \; x = y \Rightarrow y = z\}))$ PSPACE-complete?
\end{question}

We only know that this QCSP is coNP-hard~\cite{qecsps}. 

Postprint note: D. Zhuk and B. Martin solved Question 5.2~\cite{comp-eqcsps}.

\bibliographystyle{abbrv}
\bibliography{local}

\begin{thebibliography}{10}

\bibitem{AspvallPlassTarjan}
B.~Aspvall, M.~F. Plass, and R.~E. Tarjan.
\newblock A linear-time algorithm for testing the truth of certain quantified
  boolean formulas.
\newblock {\em Information Processing Letters}, 8(3):121--123, 1979.

\bibitem{BoundedWidth}
L.~Barto and M.~Kozik.
\newblock Constraint satisfaction problems of bounded width.
\newblock In {\em Proceedings of the Annual Symposium on Foundations of
  Computer Science (FOCS)}, pages 595--603, 2009.

\bibitem{Bodirsky-HDR}
M.~Bodirsky.
\newblock Complexity classification in infinite-domain constraint satisfaction.
\newblock Memoire d'habilitation \`a diriger des recherches, Universit\'{e}
  Diderot -- Paris 7. Available at arXiv:1201.0856, 2012.

\bibitem{collaps}
M.~Bodirsky and H.~Chen.
\newblock Collapsibility in infinite-domain quantified constraint satisfaction.
\newblock In {\em Proceedings of Computer Science Logic (CSL), Szeged,
  Hungary}, 2006.

\bibitem{RQCSP}
M.~Bodirsky and H.~Chen.
\newblock Relatively quantified constraint satisfaction.
\newblock {\em Constraints}, 14(1):3--15, 2009.

\bibitem{qecsps}
M.~Bodirsky and H.~Chen.
\newblock Quantified equality constraints.
\newblock {\em SIAM Journal on Computing}, 39(8):3682--3699, 2010.
\newblock A preliminary version of the paper appeared in the proceedings of
  LICS'07.

\bibitem{tcsps-journal}
M.~Bodirsky and J.~K\'ara.
\newblock The complexity of temporal constraint satisfaction problems.
\newblock {\em Journal of the ACM}, 57(2):1--41, 2009.
\newblock An extended abstract appeared in the Proceedings of the Symposium on
  Theory of Computing (STOC'08).

\bibitem{ll}
M.~Bodirsky and J.~K\'ara.
\newblock A fast algorithm and {D}atalog inexpressibility for temporal
  reasoning.
\newblock {\em ACM Transactions on Computational Logic}, 11(3), 2010.

\bibitem{qcsp}
F.~B{\"o}rner, A.~A. Bulatov, H.~Chen, P.~Jeavons, and A.~A. Krokhin.
\newblock The complexity of constraint satisfaction games and {QCSP}.
\newblock {\em Information and Computation}, 207(9):923--944, 2009.

\bibitem{Cameron}
P.~J. Cameron.
\newblock The random graph.
\newblock {\em R. L. Graham and J. Ne\v{s}et\v{r}il, Editors, The Mathematics
  of Paul Erd\"{o}s}, 1996.

\bibitem{CharatonikWronaCSL}
W.~Charatonik and M.~Wrona.
\newblock Quantified positive temporal constraints.
\newblock In {\em Proceedings of CSL}, pages 94--108, 2008.

\bibitem{CharatonikWronaLPAR}
W.~Charatonik and M.~Wrona.
\newblock Tractable quantified constraint satisfaction problems over positive
  temporal templates.
\newblock In {\em LPAR}, pages 543--557, 2008.

\bibitem{ChenThesis}
H.~Chen.
\newblock The computational complexity of quantified constraint satisfaction.
\newblock Ph.D. thesis, Cornell University, August 2004.

\bibitem{QCSP-Survey}
H.~Chen.
\newblock Meditations on quantified constraint satisfaction.
\newblock In {\em Logic and Program Semantics - Essays Dedicated to Dexter
  Kozen on the Occasion of His 60th Birthday}, Lecture Notes in Computer
  Science 7230. Springer, 2012.

\bibitem{Creignou}
N.~Creignou, S.~Khanna, and M.~Sudan.
\newblock {\em Complexity Classifications of Boolean Constraint Satisfaction
  Problems}.
\newblock SIAM Monographs on Discrete Mathematics and Applications 7, 2001.

\bibitem{CSPSurveys}
N.~Creignou, P.~G. Kolaitis, and H.~Vollmer, editors.
\newblock {\em Complexity of Constraints - An Overview of Current Research
  Themes [Result of a Dagstuhl Seminar]}, volume 5250 of {\em Lecture Notes in
  Computer Science}. Springer, 2008.

\bibitem{FederVardi}
T.~Feder and M.~Y. Vardi.
\newblock The computational structure of monotone monadic {SNP} and constraint
  satisfaction: {a} study through {D}atalog and group theory.
\newblock {\em {SIAM} Journal on Computing}, 28:57--104, 1999.

\bibitem{Hodges}
W.~Hodges.
\newblock {\em A shorter model theory}.
\newblock Cambridge University Press, Cambridge, 1997.

\bibitem{Ordered}
P.~G. Jeavons and M.~C. Cooper.
\newblock Tractable constraints on ordered domains.
\newblock {\em Artificial Intelligence}, 79(2):327--339, 1995.

\bibitem{QHorn}
M.~Karpinski, H.~K. B{\"u}ning, and P.~H. Schmitt.
\newblock On the computational complexity of quantified {H}orn clauses.
\newblock In {\em Proceedings of CSL}, pages 129--137, 1987.

\bibitem{MacphersonSurvey}
D.~Macpherson.
\newblock A survey of homogeneous structures.
\newblock {\em Discrete Mathematics}, 311(15):1599--1634, 2011.

\bibitem{comp-eqcsps}
D.~Zhuk and B.~Martin.
\newblock The complete classification for quantified equality constraints.
\newblock {\em CoRR}, abs/2104.00406, 2021.

\end{thebibliography}

\end{document}